\def\BibTeX{{\rm B\kern-.05em{\sc i\kern-.025em b}\kern-.08em
T\kern-.1667em\lower.7ex\hbox{E}\kern-.125emX}}
\newtheorem{lemma}{Lemma}
\newtheorem{theorem}{Theorem}
\newcounter{procedure}
\begin{document}
\title{Delay-Energy Joint Optimization for Task Offloading in Mobile Edge Computing}

\author{
\IEEEauthorblockN{Zhuang Wang, Weifa Liang, Meitian Huang, and Yu Ma}
\IEEEauthorblockA{Research School of Computer Science, The Australian National University, Canberra, ACT 2601, Australia}
}
\maketitle

\begin{abstract}
  Mobile-edge computing (MEC) has been envisioned as a promising paradigm to meet ever-increasing resource demands of mobile users, prolong battery lives of mobile devices, and shorten request response delays experienced by users. An MEC environment consists of many MEC servers and ubiquitous access points interconnected into an edge cloud network. Mobile users can offload their computing-intensive tasks to one or multiple MEC servers for execution to save their batteries. Due to large numbers of MEC servers deployed in MEC, selecting a subset of servers to serve user tasks while satisfying delay requirements of their users is challenging. In this paper, we formulate a novel delay-energy joint optimization problem through jointly considering the CPU-cycle frequency scheduling at mobile devices, server selection to serve user offloading tasks, and task allocations to the selected servers. To this end,
 we first formulate the problem as a mixed-integer nonlinear programming, due to the hardness to solve this nonlinear programming, we instead then relax the problem into a nonlinear programming problem that can be solved in polynomial time. We also show how to derive a feasible solution to the original problem from the solution of this relaxed solution. We finally conduct experiments to evaluate the performance of the proposed algorithm. Experimental results demonstrate that the proposed algorithm is promising.
\end{abstract}


\section{Introduction}\label{sec01}
With the ever-growing popularity of portable mobile devices, including smartphones, tablets, and so on, more and more new mobile applications, such as face recognition and augmented reality, are emerging and have attracted lots of attentions from not only cloud operators but also cellular carriers of the Telecom companies~\cite{SSXP+15, BBCH17}. These emerging applications typically demand intensive computation and high energy consumption. However, mobile devices have limited computation resources and battery life, which become the bottleneck of meeting the quality of experience (QoE) of mobile users.

A traditional approach to overcome the limitation on computation resource and battery life of mobile devices is to leverage cloud computing~\cite{HPSS+15, AA16}, where computation-intensive tasks from mobile devices are offloaded to remote clouds for processing. However, the long latency between mobile devices and remote clouds sometimes is unacceptable for delay-sensitive real-time applications. Mobile-edge computing (MEC) has been introduced in recent years that provides abundant computing resource to mobile users in their close proximity, and mobile users can offload their tasks to one or multiple nearby MEC servers for processing. MEC thus has great potentials to reduce the overall processing delay of computation-intensive tasks and to prolong the battery lifetime of mobile devices. It has attracted lots of attentions from both industries~\cite{HPSS+15,ABBC+17} and academia~\cite{AA16,CSMM+16,LB17}.

A fundamental problem in MEC is the scheduling of offloading tasks, which poses great challenges. For example, due to the limited battery capacity and computation resource, mobile users usually offload their computing-intensive tasks to MEC servers to prolong the battery life of their mobile devices. To meet stringent delay requirement of an offloaded task for delay-sensitive applications such as interactive gaming and augmented reality, each offloaded task needs to be partitioned into multiple chunks and be offloaded to different servers for processing. It becomes critical to optimally allocate tasks from mobile devices to MEC servers in order to minimize the energy consumption of mobile devices while meeting delay requirements of mobile users. Furthermore, when a task is executed on a mobile device, the amount of energy consumed at the mobile device can be minimized by optimally scheduling the CPU-cycle frequency of the mobile device via Dynamic Voltage and Frequency Scaling (DVFS)~\cite{RCN02}. It is worth noting that task allocation needs to be considered with the CPU-cycle frequency scheduling. When performing a task allocation, which size of part of the task should be processed locally, i.e., the number of chunks of the task to be executed on its local mobile device, needs to be determined too. In this paper we will address the mentioned challenges. 
%

The novelty of this paper lies in the formulation of a delay-energy joint optimization problem in MEC and a novel solution to the problem is provided through a series of reductions. 

The main contributions of this paper are summarized as follows.
\begin{itemize}
\item We study task offloading from mobile devices to MEC. We formulate a novel delay {--} energy joint optimization problem for task offloading in MEC that jointly takes into account both response delays and energy consumptions on mobile devices. 

\item We first formulate a mixed-integer nonlinear program for the problem, we then relax the problem into a nonlinear program problem that can be solved in polynomial time. We later show how to derive a feasible solution to the problem from the solution of the relaxed problem.

\item We finally evaluate the performance of the proposed algorithm for the joint optimization problem  through experimental simulation. Experimental results demonstrated the proposed algorithm is very promising, and outperforms its analytical solution as the theoretical estimation is conservative.
\end{itemize}

The rest of this paper is organized as follows. Section~\ref{sec02} introduces the network model and problem formulation. Section~\ref{sec03} provides a solution to  the relaxed version of the problem, and Section~\ref{sec04} shows how to derive a feasible solution to the original problem  for the solution to the relaxed one. Section~\ref{sec05} evaluates the performance of the proposed algorithm through experimental simulations. Section~\ref{sec06} surveys related works, and Section~\ref{sec07} concludes the paper.

\section{Preliminaries}\label{sec02}
In this section, we first introduce the network model. We then introduce the local computing model at each mobile device, including the local execution delay and the energy consumption. We finally introduce the mobile-edge computing model that includes the energy consumption of offloading tasks to MEC through wireless transmission and the processing delay of offloaded tasks in MEC.

\subsection{Network model}
We consider a mobile-edge computing  network (MEC) that consists of  numbers of Access Points (AP) and servers. Denote by $\mathscr{S} = \{s_1, s_2, \cdots, s_N\}$ the set of the servers in the network and $N$ is the number of servers in the MEC, as shown in Fig.~\ref{fig:network_model}. For the sake convenience, we here consider a single mobile device (MD) who accesses a nearby AP to offload his tasks for processing, and we further assume that there is neither wireless channel interferences at the AP nor overloading issues on the servers. We adopt the similar assumptions imposed on each offloading task~\cite{MYZH17}, that is, each task can be arbitrarily divided into different-sized chunks and processed at different servers and/or local mobile devices. We assume that the global information of servers in MEC is given at the AP.

\begin{figure}[hp]
\begin{center}
\includegraphics[scale=0.3]{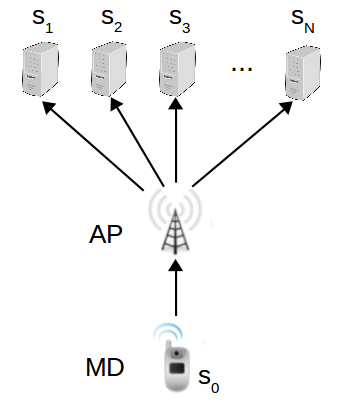}
\end{center}
\caption{An illustrative example of a MEC network with one task offloading from a  mobile user.}
\label{fig:network_model}
\end{figure}

For a given task, whether it is offloaded to MEC entirely or how much proportion of the task will be processed by its mobile device locally, it will determined by a central system scheduler through the AP, which is responsible for the decision and distribution of task chunks to different servers in MEC. 

\subsection{Local computing at the mobile device}
Let $A(L, \tau_d)$ denote a task, where $L$ (in bits) is the input size of the task and $\tau_d$ is the execution deadline, i.e., if the task is to be executed, it should be finished within $\tau_d$ time units. For convenience, denote by $s_0$ the mobile device of the task.  Let $z_i$ be a binary variable, where $z_i= 1$ if the task is partially offloaded to server $s_i$ and $z_i = 0$ otherwise for all $i$ with $ 1\leq i \leq N$, and $z_0=1$ if there is part of the task to be processed at the mobile device. Denote by $\eta_i$ the percentage of task $A(L, \tau_d)$ offloaded to server $s_i$ with $ 0\leq \eta_i \leq 1$. Then, we have
\begin{eqnarray}\label{eqzn}
  \textstyle \sum^N_{i=0} z_i\eta_i = 1,~~\text{$0\leq i \leq N$}.
\end{eqnarray}

A mobile device can execute its task locally. This incurs the local execution delay and energy consumption at the mobile device, while they are jointly determined by the number of CPU cycles of the mobile device.  Denote by $\gamma_A$ the number of CPU cycles required for processing one bit of task $A(L, \tau_d)$, which usually is given and can be obtained through off-line measurements~\cite{MN10}. Hence, the number of CPU cycles needed for processing local chunk of task $A(L, \tau_d)$ is $W=\gamma_Az_0\eta_0L$\footnote{$W$ is truncated to a nearest integer $\lceil W\rceil$ if it is not an integer.}. Since the CPU-cycle frequencies of the mobile device are adjustable through DVFS techniques to reduce the energy consumption, denote by $f_w$ the CPU frequency of the mobile device at the $w$th CPU cycle with $w = 1,\cdots, W$, which is constrained by $f_{\max}$ as follows.
\begin{eqnarray}
    0\leq f_w \le f_{\max}. \label{eqfw}
\end{eqnarray}
As a result, the local execution delay $D_l$ of $A(L, \tau_d)$ is defined as
\begin{eqnarray} \label{eqdl}
    \textstyle D_l = \sum^{W}_{w=1}(f_w)^{-1},
\end{eqnarray}

and the energy consumption $E_l$ of the local execution of $A(L, \tau_d)$ is
\begin{eqnarray}\label{ZZZ}
    \textstyle E_l = \kappa\sum^{W}_{w=1}{f_w}^2,
\end{eqnarray}
where the constant $\kappa$ is the effective switched capacitance that depends on the chip architecture~\cite{MYZH17}.

\subsection{Server computing in MEC}
The mobile device offloads its tasks to MEC via the AP. Such task offloading involves two important metrics: {\it the end-to-end delay} between the time point that a mobile device offloads its task to the MEC and the time point that the execution result of the offloaded task received by the the mobile device; and {\it the energy consumption} of the mobile device on its transmission of the task chunks from the mobile device to the AP. Specifically, the end-to-end delay consists of the transmission delay from the mobile device to the AP, the transmission delay of task chunks of $A(L, \tau_d)$ from the AP to the servers, and the processing delay of task chunks in servers. In the following we give the detailed definition of these two metrics.


Let $r_{h, p}$ denote the maximum achievable uplink transmission rate between the mobile device and the AP which is given in advance. Then, the transmission delay $D_{lp}$ of task chunks of $A(L, \tau_d)$ from the mobile device to the AP is
\begin{eqnarray} \label{eqdlp}
    D_{lp} = \frac{(1-z_0\eta_0)L}{r_{h, p}},
\end{eqnarray}
where $z_0\eta_0$ is the percentage of $A(L, \tau_d)$ processed locally.

The task chunks then are distributed to different servers in MEC once the mobile device finishes its task transmission.
We assume that the number of servers in MEC to which a task can be offloaded is limited by a given integer $m$ ($m \le N$), i.e., the number of servers in MEC to serve a task cannot exceed $m$, because it is inefficient and impractical to offload a task to all servers in a large-scale  MEC.
Since there is no contention for the computational capacity of the servers, the execution of the subtasks can start once the subtasks are offloaded to the servers.

Let $r_i$ denote the transmission rate between the AP and server $s_i$ which is given as well. Then, the  transmission delay $D_{ps}^i$ of the corresponding subtask from the AP to server $s_i$ is
\begin{eqnarray}
    D_{ps}^i = \frac{z_i\eta_iL}{r_i}.
\end{eqnarray}

Let $c_i$ be the computational capability of server $s_i\in {\mathscr{S}}$ (i.e., CPU cycles per second).
The processing delay $D_{sc}^i$ for executing the subtask on server $s_i$ is
\begin{eqnarray}
    D_{sc}^i = \frac{\gamma_Az_i\eta_iL}{c_i}.
\end{eqnarray}

where $\gamma_A$ is the number of CUP cycles required for one-bit data processing.
An offloaded task is finished when all of its subtasks are finished. The end-to-end delay $R_{\max}$ of the offloaded task thus
is defined as the maximum end-to-end delay among all subtasks processed in servers,
while the end-to-end delay of a subtask in server $s_i$  is defined as
the sum of the transmission delays from the mobile device to the AP and from the AP to each chosen server $s_i$
and the task processing delay in $s_i$. Then, $R_{\max}$ is defined as
\begin{eqnarray} \label{eqR}
    R_{\max} = \max_{1\leq i\leq N}\{D_{lp} + D_{ps}^i + D_{sc}^i\}.
\end{eqnarray}

Notice that $R_{\max}$ does not consider the downlink transmission delay, since the downlink data size of $A(L, \tau_d)$ is negligible compared with $L$~\cite{CJLF16}. However, the analysis techniques and the proposed algorithms in this paper are still applicable if the  downlink transmission delay must be considered. 

Denote by $P_{tx}$ the transmission power of the mobile device. We then define the amount of energy consumed $E_{lp}$ by transmitting the task chunks of $A(L, \tau_d)$ from the mobile device to the AP as follows.
\begin{eqnarray} \label{eqelp}
    E_{lp} = \frac{P_{tx}(1-z_0\eta_0)L}{r_{h, p}} + \textbf{1}\{z_0\eta_0 \ne 1\}E_t,
\end{eqnarray}
where 
$\textbf{1}\{x\}$ is an indicator function with 
$\textbf{1}\{x\}$ if $x$ is true and 
$\textbf{1}\{x\} = 0$ otherwise.
$E_t$ is the tail energy due to that the mobile device will continue to hold the channel for a while even after the data transmission~\cite{HC17}.

\subsection{Problem formulation}
%
We define a delay-energy joint optimization problem  $P_0$ for a task offloading of $A(L,\tau_d)$  in MEC with the optimization objective to minimize the weighted sum of the energy consumption at the mobile device and the overall delay of the task execution. 
\begin{eqnarray}
    P_0: & \text{minimize} & E_l + E_{lp} + \alpha \max\{D_l, R_{\max}\} \notag \\
    & \text{s.t.} & (\ref{eqzn})-(\ref{eqelp}), \notag \\
    & & 0 \le \eta_i \le 1, \forall i=0,1,\cdots, N, \label{eqn} \\
    & & \textstyle \sum^{N}_{i=1} z_i \le m, \label{eqzm} \\
    & & \max\{D_l, R_{\max}\} \le \tau_d, \label{eqDRD}
\end{eqnarray}
where $\alpha$ is a coefficient that strives for the tradeoff between the total energy consumption and the overall delay incurred by the task execution~\cite{LCLV17,MA10,LB17}, $z_i$, $\eta_i$ and $f_w$ are variables appeared in Eq.~(\ref{eqzn}) {--} Eq.~(\ref{eqelp}), $E_l + E_{lp}$ is the total amount of energy consumed by the mobile device for task $A(L,\tau_d)$, and $\max \{D_l, R_{\max}\}$ is the overall delay of the task execution which is the maximum of the local execution time at the mobile device and the end-to-end delay in MEC.

Constraint~(\ref{eqn}) indicates that the task can be divided into multiple subtasks and offloaded to different servers. Constraint~(\ref{eqzm}) requires that the number of selected servers is no greater than a given value $m$, while Constraint~(\ref{eqDRD}) is the deadline requirement of task $A(L,\tau_d)$.

\section{A relaxation of the delay-energy joint optimization problem }\label{sec03}
In this section we deal with a relaxation of the problem $P_0$ by providing an optimal solution to the relaxed one as follows.

\subsection{Overview of the proposed algorithm}
 $P_0$ is a mixed integer nonlinear program problem due to  the binary variable $z_i \in \{0,1\}$ with $0\leq i\leq N$ in Constraints~(\ref{eqzn}) and~(\ref{eqzm}) and nonlinear constraints~(\ref{eqdl}) and~(\ref{ZZZ}).

 We  eliminate $z_i$ by assuming that the subset of servers $\mathscr{S}_s$ for task execution while meeting its over delay requirement is given. Then, problem $P_0$ is relaxed to a nonlinear program $P_1$ (Section III.B).


Notice that the optimal solution to $P_1$ later will be used to find an optimal subset of servers for the execution of task $A(L,\tau_d)$ (Section IV.A), and an algorithm is then devised for problem $P_0$ in Section IV.B.

\subsection{Problem relaxation}
$P_0$ includes integer variables $z_i$ and real number variables $\eta_i$ and $f_w$\footnote{The CPU-cycle frequencies are integers. While $f_w$ is a very large integer, often near to ${10}^9$. Hence, we can truncate the real number value of $f_w$ to an integer, which hardly impacts the analysis.}.
We substitute $z_i\eta_i$ with $x_i$ to eliminate integer variables and drop the tail energy $E_t$  in $P_0$.

Denote by $\mathscr{S}_s = \{s_{j_1}, \cdots, s_{j_n}\} \subset \mathscr{S}$ the set of $n$ selected servers for task $A(L,\tau_d)$ while the execution of the offloaded subtasks in MEC meets the overall delay of the task, where $0 < n \le m$, $z_{j_i} = 1$, and $i = 1, \cdots, n$. Hence, Having determined the chosen servers to allocate the subtasks of Task $A(L,\tau_d)$,
problem $P_0$ now can be relaxed into a nonlinear program $P_1$ as follows.
\begin{eqnarray}
    & P_1: \min &  \kappa\sum\limits^{B_0x_0}_{w=1}{f_w}^2 - \phi x_0 + \alpha \max\{D_l, R_{\max}\} \notag \\
    & \text{s.t.} & (\ref{eqfw}), (\ref{eqdl}), (\ref{eqdlp})-(\ref{eqR}),\notag \\
    & & \textstyle \sum^{n}_{i=0} x_i = 1, \label{eqx1} \\
    & & x_i \ge 0, \forall i = 0,1, \cdots,n, \label{eqx2}\\
    & & R_{\max} = \max_{1\leq i \leq n} \{R_i\}\\
    && R_i = q_ix_i + q_0(1-x_0), \forall i=1,\cdots, n,\\
    && \phi  = P_{tx}L/r_{h, p}, \\
    && q_0 = L/r_{h, p},\label{q000} \\
    && q_i = L/r_{j_i} + \gamma_AL/c_{j_i}, \forall i=1,\cdots,n,\label{qi111}\\
    && x_0 = z_0\eta_0, \\
    && x_i = z_{j_i}\eta_{j_i}, \forall i=1,\cdots, n \\
    && B_0 = \gamma_AL, \label{b0}
\end{eqnarray}
where $x_i$ and $f_w$ are variables. Constraints (\ref{eqx1}) and (\ref{eqx2}) in $P_1$ are equivalent to Constraints (\ref{eqzn}), (\ref{eqn}) and (\ref{eqzm}) in $P_0$.
The difference between the objectives of $P_0$ and $P_1$ is $\phi + E_t$. Recall that $z_{j_i}$, $c_{j_i}$, $\gamma_A$, and $r_{h, p}$ are constants.

We assume $\kappa\sum^{B_0x_0}_{w=1}{f_w}^2 > P_{tx}x_0L/r_{h, p}$ in this paper, i.e., the amount of energy consumed for executing a task with size $x_0L$ at the mobile device is greater than that for transmitting the task chunks from the mobile device to the AP without considering the tail energy. Hence, the value of Problem $P_1$ is greater than 0.

\subsection{Solution to the relaxed problem $P_1$}
To solve problem $P_1$, we distinguish it into two cases:
Case 1. $D_l \ge R_{\max}$; 
and Case 2. $D_l \le R_{\max}$. 
We adopt a strategy of solving $P_1$ through the transformation of problem $P_1$ into another nonlinear program without the variable $f_w$, i.e., the CPU-cycle frequency. We then solve the nonlinear program as follows.

{\bf Case 1.}
The local execution delay $D_l$ is no less than the end-to-end delay $R_{\max}$, and thus determines the overall delay. Problem $P_1$ then can be rewritten as the following problem $P_2$.
\begin{eqnarray}
    P_2: & \min & \kappa\sum\limits^{B_0x_0}_{w=1}{f_w}^2 - \phi x_0 + \alpha \sum^{B_0x_0}_{w=1}(f_w)^{-1} \notag \\
    & \text{s.t.} & (\ref{eqfw}), (\ref{eqdlp})-(\ref{eqR}), (\ref{eqx1})~\text{and}~ (\ref{eqx2}) \notag \\
    && \sum^{B_0x_0}_{w=1}{f_w}^{-1} \ge R_{\max},  \label{eqfwR} \\
    & & R_{\max} = \max_{1\leq i \leq n} \{R_i\}. \label{eqRR}
\end{eqnarray}

In the following we solve problem $P_2$ by first solving its simplified version that focuses on CPU frequencies, and then extending this simplified solution to solve $P_2$.

Let $B=B_0x_0$, and let $F(f_1, \cdots, f_B)$ be the sum of the first and third terms in the optimization objective of $P_2$ that includes CPU-frequency variable $f_w$ only, i.e., 
\begin{eqnarray} \label{eqFfw}
    F(f_1,\cdots,f_B) = \kappa\sum^{B}_{w=1}{f_w}^2 + \alpha \sum^{B}_{w=1}(f_w)^{-1}.
\end{eqnarray}

The partial derivative of $F(f_1,\cdots,f_B)$ with respect to $f_w$ is
\begin{eqnarray}
    \pdv{F}{f_w} = 2\kappa f_w - \alpha(f_w)^{-2},
\end{eqnarray}
where $w = 1, \cdots, B$.

 If $(\alpha/2\kappa)^{1/3} \le f_{\max}$, the only stationary point of $F(f_1,\cdots,f_B)$ is $f_w = (\alpha/2\kappa)^{1/3}$ for all $w$ with $1\leq w\leq B$, and
 the minimum value of $F(f_1,\cdots,f_B)$ is $\kappa B (\alpha/2\kappa)^{\frac{2}{3}} + \alpha B (\alpha/2\kappa)^{-\frac{1}{3}}$. Otherwise, we have $\pdv{F}{f_w} < 0$, and  $F(f_1,\cdots,f_B)$ reaches the minimum value of $\kappa B {f_{\max}}^{\frac{2}{3}} + \alpha B {f_{\max}}^{-\frac{1}{3}}$ when $f_w = f_{\max}$ for all $w$
 with $1\leq w\leq B$.

In summary, to achieve the minimum value of $F(f_1,\cdots,f_B)$, each $f_w$ with $1\leq w \leq B$ must be set as follows.
\begin{eqnarray} \label{eqfm}
    f_w = \bar{f} =
    \begin{cases}
        (\frac{\alpha}{2\kappa})^{\frac{1}{3}}, ~~~\text{if $(\frac{\alpha}{2\kappa})^{\frac{1}{3}} \le f_{\max}$} \\
        f_{\max}, ~~~\text{otherwise.}
        \end{cases}
\end{eqnarray}

Having the solution to $F(f_1,\cdots,f_B)$, we now solve problem $P_2$ whose objective is the sum of term $-\phi x_0$ and $F(f_1,\cdots,f_B)$.
Notice that (i) $F(f_1,\cdots,f_B)$ is a function of variables $x_0$ and $f_w$, while item  $-\phi x_0$ is the function of variable $x_0$ only. Thus, the objective function of $P_2$ is the sum of functions  $F(f_1,\cdots,f_B)$ and $-\phi x_0$. Before we solve  $P_2$, we present the following lemma. 

\begin{lemma} \label{lemma333}
    Given three functions $g_1(x,y)$, $g_2(x)$ and $g_3(x,y)=g_1(x,y) + g_2(x)$, where $x$ and $y$ are both sets, then we have that
    \begin{eqnarray}
        \min_{x, y} g_3(x,y) = \min_{x} \{\min_{y} g_1(x,y) + g_2(x)\}. \notag
    \end{eqnarray}
\end{lemma}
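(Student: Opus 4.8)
The plan is to establish the identity by two elementary moves: decomposing the joint minimization into a nested one, and then pulling the $y$-independent summand out of the inner minimum. First I would invoke the standard fact that minimizing over the pair $(x,y)$ simultaneously is equivalent to minimizing over $y$ for each fixed $x$ and then minimizing the result over $x$, namely $\min_{x,y} g_3(x,y) = \min_x \{ \min_y g_3(x,y) \}$. This holds for any real-valued function regardless of its analytic properties, since both sides equal the infimum of the same set of values $\{ g_3(x,y) \}$ indexed over the product domain.

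Next I would substitute the defining relation $g_3(x,y) = g_1(x,y) + g_2(x)$ into the inner minimum. The crucial observation is that $g_2(x)$ carries no dependence on $y$, so throughout the inner minimization it is merely an additive constant and may be factored outside it: $\min_y \{ g_1(x,y) + g_2(x) \} = \min_y g_1(x,y) + g_2(x)$. Feeding this back into the outer minimum over $x$ yields exactly $\min_x \{ \min_y g_1(x,y) + g_2(x) \}$, which is the claimed right-hand side, closing the chain of equalities.

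The only point requiring any care will be justifying that $g_2(x)$ can legitimately be removed from the scope of $\min_y$; this is immediate, because adding a fixed constant to every element of a set shifts its infimum by precisely that constant. No convexity, continuity, or differentiability of the $g_i$ is needed, and the reasoning remains valid whether the extrema are attained as genuine minima or only interpreted as infima. Since the argument is purely structural, I do not anticipate any substantive obstacle; the value of the lemma lies not in its difficulty but in the way it will later let us decouple the minimization over the CPU-frequency variables $f_w$ (playing the role of $y$) from the term $-\phi x_0$ that depends on $x_0$ alone (playing the role of $x$).
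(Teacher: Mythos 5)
Your proof is correct, but it takes a different route from the paper's. You argue directly: first reduce the joint minimization to a nested one, $\min_{x,y} g_3 = \min_x\{\min_y g_3(x,y)\}$, and then extract the $y$-independent summand $g_2(x)$ from the inner minimum as an additive constant. The paper instead argues by contradiction on an attained optimizer: it takes $(x^*,y^*)$ optimal for $g_3$, supposes $y^*$ is not a minimizer of $g_1(x^*,\cdot)$, and derives a strictly better point, concluding that any optimizer of $g_3$ must have its $y$-component equal to $\arg\min_y g_1(x,y)$. Your version is arguably cleaner and slightly more general --- it needs no assumption that the minima are attained, since every step is an identity of infima --- and it makes explicit the two separate ingredients (the nested-min identity and the constant-extraction step); note that the nested-min identity you invoke is precisely the content of Lemma~\ref{lemma12}, which the paper proves separately later, so your route would in effect derive Lemma~\ref{lemma333} from Lemma~\ref{lemma12}. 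What the paper's contradiction argument buys in exchange is an explicit characterization of the optimizer itself (the optimal $y^*$ is the inner argmin at the optimal $x^*$), which is the form actually used downstream when the CPU frequencies $f_w$ are fixed at $\bar f$ before optimizing over $x_0$. Either proof is acceptable.
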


\begin{proof}
    The proof can be obtained by contradiction. Let $\{x^*, y^*\}$ be the optimal solution for $\min g_3(x, y)$. Assume that $y^*$ is not the optimal solution for $\min_{y} g_1(x^*, y)$. Then there must exist a $y^{'}$ satisfying
    \begin{eqnarray}
        g_3(x^*, y^*) - g_3(x^*, y^{'}) = g_1(x^*, y^*) - g_1(x^*, y^{'}) > 0, \notag
    \end{eqnarray}
    which contradicts that $g_3(x^*, y^*)$ is the minimum.
    Therefore, the assumption is not true and $y^*$ is the optimal solution for $\min_{y} g_1(x^*, y)$, i.e., the optimal solution for $\min g_3(x, y)$ should be in
    \begin{eqnarray}
        \{(x, y^*)\ |\ y^* = \arg \min_{y} g_1(x, y)\}.  \notag
    \end{eqnarray}
  Lemma~\ref{lemma333} then follows.
\end{proof}

Following Lemma~\ref{lemma333}, we rewrite the optimization objective of $P_2$ by replacing its terms $\kappa\sum^{B}_{w=1}{f_w}^2 + \alpha \sum^{B}_{w=1}(f_w)^{-1}$ with $B(\kappa(\bar{f})^{\frac{2}{3}}+\alpha(\bar{f})^{-\frac{1}{3}})$. We term this transformed optimization problem as problem $P_3$, which is defined as follows.
\begin{eqnarray}
    P_3: & \min &  (\kappa B_0 \bar{f}^2 + \alpha B_0/\bar{f} - \phi)x_0 \notag \\
    & \text{s.t.} & (\ref{eqdlp})-(\ref{eqR}), (\ref{eqx1}), (\ref{eqx2}), (\ref{q000}),  ~\text{and} (\ref{qi111}, ) \notag \\
    && B_0=\gamma_AL\notag\\
    && q_0(1-x_0) + q_ix_i  \le \frac{B_0}{\bar{f}}x_0,
\end{eqnarray}

It can be seen that $P_3$ is a linear program, which can be solved in polynomial time. 

{\bf Case 2.} If the local executing delay $D_l$ is no greater than the end-to-end delay $R_{\max}$, problem  $P_1$ can be rewritten as problem $P_4$ as follows.
\begin{eqnarray}
    P_4: & \min & \kappa \sum\limits^{B_0x_0}_{w=1}{f_w}^2 - \phi x_0 + \alpha R_{\max} \notag \\
    &\text{s.t.} & (\ref{eqfw}), (\ref{eqdlp})-(\ref{eqR}), (\ref{eqx1}), (\ref{eqx2}) ~\text{and } (\ref{eqRR}), \notag \\
    & & \textstyle \sum^{B_0x_0}_{w=1}f_w^{-1} \le R_{\max}. \label{eqfwRRR}
\end{eqnarray}

We adopt the similar strategy for solving $P_4$ as we did for $P_2$. It can be seen the objective function of $P_2$ consists of two functions: function $\kappa \sum\limits^{B_0x_0}_{w=1}{f_w}^2$ has variables $f_w$ and $x_0$; and function
$-\phi x_0 + \alpha R_{\max}$ has variables $x_i$ as $R_{\max}$ is a function of $x_i$ with $0\leq i \leq n$.

By Lemma~\ref{lemma333}, we first determine the value of each $f_w$ to minimize the value of function $\kappa \sum\limits^{B_0x_0}_{w=1}{f_w}^2$ as follows. 

By Inequalities~(\ref{eqfw}) and (\ref{eqfwRRR}), we have 
\begin{eqnarray} \label{eqBfwR}
    \frac{B}{f_{\max}} \le \sum^{B}_{w=1}\frac{1}{f_w} \le R_{\max}.
\end{eqnarray}
By Inequality (\ref{eqBfwR}), we have 
\begin{eqnarray}
    B /R_{\max} \le f_{\max}.
\end{eqnarray}
Denote by $R'=\sum\limits^{B}_{w=1}({1}/{f_w})$. \\
Then, $R' \le R_{\max}$ by Inequality~(\ref{eqfwRRR}).

By Jensen's inequality (see the appendix), we then have 
\begin{eqnarray} \label{eqjensen}
    \sum^{B}_{w=1}{f_w}^2 \ge \frac{B^3}{{R^{'}}^2} \ge \frac{B^3}{R_{\max}^2}.
\end{eqnarray}
Hence, $\sum^{B}_{w=1}{f_w}^2$ reaches the minimum ${B^3}/{R_{\max}^2}$ when \break $f_w = \frac{B}{R_{\max}}$ for all $w$ with $1 \le w \le B$. Then,
$D_l =  R_{\max}= \sum^{B}_{w=1}(\frac{1}{f_w})$.

Problem $P_5$ is then derived as follows, by replacing $\sum^{B}_{w=1}{f_w}^2$ in $P_4$ with $B^3/R_{\max}^2$.
\begin{eqnarray}
    P_5: & \min & \frac{Kx_0^3}{R_{\max}^2} - \phi x_0 + \alpha R_{\max} \notag \\
    & \text{s.t.} & (\ref{eqdlp})-(\ref{eqR}), (\ref{eqx1}), (\ref{eqx2}), (\ref{b0}) ~\text{and}~ (\ref{eqRR}), \notag\\
& & K=\kappa B_0^3.
\end{eqnarray}


\begin{lemma} \label{lemmaP1P5}
    $P_1$ is equivalent to $P_5$.
\end{lemma}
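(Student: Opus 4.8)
The plan is to prove the equivalence by eliminating the frequency variables $f_w$ from $P_1$ and showing that the resulting reduced problem is \emph{exactly} $P_5$, so that the two share the same optimal value and their optimizers correspond. I would fix the allocation variables $x_0,\dots,x_n$ and minimise over the $f_w$ first, which is justified by Lemma~\ref{lemma333}: the $P_1$ objective is the sum of $\kappa\sum_{w=1}^{B}f_w^2+\alpha\max\{D_l,R_{\max}\}$, whose only $f_w$-dependence is through $\sum_w f_w^2$ and $D_l=\sum_w 1/f_w$, and the term $-\phi x_0$, which is constant in $f_w$. Thus $\min_{x,f}$ of the $P_1$ objective equals $\min_x\{\,[\min_f(\kappa\sum_w f_w^2+\alpha\max\{D_l,R_{\max}\})]-\phi x_0\}$.

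For the inner minimisation, with $x$ (hence $B=B_0x_0$ and $R_{\max}$) held fixed, I would first observe that for a prescribed local delay $D_l=t$ the energy $\kappa\sum_w f_w^2$ is minimised by taking all frequencies equal, $f_w=B/t$, giving value $\kappa B^3/t^2$; this is precisely the Jensen step already recorded in Inequality~(\ref{eqjensen}), and the choice is admissible whenever $B/t\le f_{\max}$, i.e. $t\ge B/f_{\max}$. The inner problem therefore collapses to the one-dimensional minimisation of $h(t)=\kappa B^3/t^2+\alpha\max\{t,R_{\max}\}$ over $t\ge B/f_{\max}$.

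The heart of the argument, and the step I expect to be the main obstacle, is to show that $h$ attains its minimum at the kink $t=R_{\max}$, so that the optimal CPU schedule drives the local delay down to exactly the end-to-end delay. On the branch $t\le R_{\max}$ the function $h(t)=\kappa B^3/t^2+\alpha R_{\max}$ is strictly decreasing, so its minimum is at $t=R_{\max}$; on the branch $t\ge R_{\max}$ one has $h(t)=\kappa B^3/t^2+\alpha t$, whose stationary point is $t^\ast=B(2\kappa/\alpha)^{1/3}$. The claim that the overall minimiser is $t=R_{\max}$ thus reduces to ruling out that the smooth Case-1 stationary point $t^\ast$ beats the boundary; here I would invoke the feasibility bound $B/R_{\max}\le f_{\max}$ derived from Inequality~(\ref{eqBfwR}) together with the standing assumption that the local execution energy dominates the transmission energy, which is what keeps the energy term from making it profitable to run the CPU faster than the MEC delay requires. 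Making this comparison airtight is the delicate point, since in principle a sufficiently small $\alpha$ could push $t^\ast$ past $R_{\max}$ into the smooth branch.

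Once this is in hand, the conclusion is immediate. Substituting $t=R_{\max}$ into $h$ and reinstating $-\phi x_0$ gives the objective $\kappa B^3/R_{\max}^2-\phi x_0+\alpha R_{\max}$, and since $B=B_0x_0$ and $K=\kappa B_0^3$ we have $\kappa B^3=Kx_0^3$, so this is exactly the objective of $P_5$. The feasible regions also match: Constraints~(\ref{eqx1}) and~(\ref{eqx2}), the delay relations~(\ref{eqdlp})--(\ref{eqR}), and~(\ref{eqRR}) are common to both, while the frequency bounds~(\ref{eqfw}) of $P_1$ survive only as the admissibility condition $B/R_{\max}\le f_{\max}$ for the equal-frequency schedule. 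Hence for every admissible $x$ the value of the reduced $P_1$ equals that of $P_5$, their minima over $x$ coincide, and from an optimal $x^\ast$ of $P_5$ the schedule $f_w=B/R_{\max}$ recovers an optimal solution of $P_1$. This establishes $P_1\equiv P_5$.
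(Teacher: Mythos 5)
Your reduction is set up correctly and is essentially the argument the paper intends (the paper's own proof is the two-sentence case split ``$P_1\equiv P_3$ in Case~1, $P_1\equiv P_5$ in Case~2, and $P_3$ is a special case of $P_5$''), but the step you yourself flag as delicate is a genuine gap, and the tools you propose cannot close it. For fixed $x$, the inner problem $\min_t\{\kappa B^3/t^2+\alpha\max\{t,R_{\max}\}\}$ over $t\ge B/f_{\max}$ is minimised at $t=\max\{t^\ast,R_{\max}\}$ with $t^\ast=B(2\kappa/\alpha)^{1/3}=B/\bar{f}$, \emph{not} always at the kink. The condition for the kink to win is $B/R_{\max}\le(\alpha/2\kappa)^{1/3}=\bar{f}$, whereas the feasibility bound from~(\ref{eqBfwR}) only gives $B/R_{\max}\le f_{\max}$, and $\bar{f}$ can be far below $f_{\max}$ (indeed~(\ref{eqfm}) is built around exactly this case). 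The standing energy-dominance assumption constrains $\phi$, not $\alpha$, so it is irrelevant here. Consequently your assertion that ``for every admissible $x$ the value of the reduced $P_1$ equals that of $P_5$'' is false: whenever $R^\ast x_0>R_{\max}$ (e.g.\ $x_0$ close to $1$, where Lemma~\ref{lemmaintervalR} forces $R_{\max}\le(1-x_0)Q_u$ to be small), the reduced $P_1$ objective is strictly below the $P_5$ objective, so the two problems are not pointwise identical and the claimed correspondence of optimisers does not follow.

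What is actually true, and what must be argued, is only that the \emph{optimal values} coincide. On the region where $t^\ast>R_{\max}$ the reduced objective collapses to $(\kappa B_0\bar{f}^2+\alpha B_0/\bar{f}-\phi)x_0$, i.e.\ the objective of $P_3$; by the standing assumption its coefficient is positive, so it is increasing in $x_0$ and attains its minimum on the boundary of that region, where $t^\ast=R_{\max}$ and the value equals the $P_5$ value at the same point. Hence $\min P_1=\min\{\min P_3,\min P_5\}=\min P_5$. This boundary-matching argument is exactly what the paper's remark ``$P_3$ is a special case of $P_5$ when $R_{\max}=B_0x_0/\bar{f}$'' gestures at, and it is made rigorous only by the monotonicity facts established later (Lemma~\ref{lemmaminh} and Theorem~\ref{theoremOPTP1}, which show the optimum of $h$ lies at $x_0^\ast\le\bar{Q}/(R^\ast+\bar{Q})$, i.e.\ inside the Case-2 region). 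You need to add this final comparison; without it the proof does not establish the lemma.
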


\begin{proof}
It can be seen that problem $P_3$ is a special case of problem $P_5$ when $R_{\max} = (B_0x_0)/{f_m}$. Under Case 1, $P_1$ is equivalent to $P_3$; while under Case 2, $P_1$ is equivalent to $P_5$. Hence, $P_1$ is equivalent to $P_5$.
\end{proof}

The rest is to solve the nonlinear program problem {--} $P_5$.
Before we proceed, we have the following lemma. 
\begin{lemma} \label{lemma12}
    If $f(\cdot)$ is a real-valued function of two variables of $x$ and $y$, which is defined whenever $x \in X$ and $y \in Y$ and $X$ and $Y$ are two domains. Then
    \begin{eqnarray}\label{claim000}
        \min_{x\in X, y\in Y} f(x, y) = \min_{x \in X} {\min_{y \in Y} f(x, y)}.
    \end{eqnarray}
\end{lemma}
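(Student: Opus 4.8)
The plan is to prove the identity by a two-sided inequality argument, which is the cleanest route and essentially reuses, in the scalar setting, the reasoning already behind Lemma~\ref{lemma333}. Throughout I assume the relevant minima are attained on $X$ and $Y$, consistently with how $\min$ is used elsewhere in the paper; otherwise the statement holds verbatim with $\inf$ in place of $\min$.

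First I would establish the ``$\le$'' direction. Let $L = \min_{x\in X,\, y\in Y} f(x,y)$. By definition $L \le f(x,y)$ for every admissible pair $(x,y)$, so for each fixed $x\in X$ the constant $L$ is a lower bound of $f(x,\cdot)$ over $Y$, giving $L \le \min_{y\in Y} f(x,y)$. Since this holds for all $x\in X$, taking the minimum over $x$ preserves the bound and yields $L \le \min_{x\in X}\min_{y\in Y} f(x,y)$.

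Next I would prove the reverse ``$\ge$'' direction. Let $(x^*,y^*)$ attain the joint minimum, so $f(x^*,y^*) = L$. Fixing $x = x^*$, we have $\min_{y\in Y} f(x^*,y) \le f(x^*,y^*) = L$; minimizing the inner quantity over $x$ can only decrease it further, so $\min_{x\in X}\min_{y\in Y} f(x,y) \le \min_{y\in Y} f(x^*,y) \le L$. Combining the two inequalities gives the claimed equality.

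I do not anticipate a genuine obstacle here: this is the standard fact that a joint minimum over a product domain coincides with the corresponding iterated minimum, and both inequalities follow immediately from the defining lower-bound property of the minimum. The only point requiring care is the implicit attainment assumption noted above. As an alternative one could simply adapt the contradiction scheme of Lemma~\ref{lemma333} to the scalar arguments $x$ and $y$, since Lemma~\ref{lemma12} is precisely the separation-free core of that earlier result.
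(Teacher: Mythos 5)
Your proof is correct and is essentially the same argument as the paper's: the paper states the ``$\le$'' direction as evident and establishes the ``$\ge$'' direction by contradiction, which is just the contrapositive form of your direct second inequality. Your version is a slightly cleaner, direct rendering of the same two-sided comparison (and your remark about attainment versus $\inf$ is a fair caveat the paper glosses over).
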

\begin{proof}
    We show the claim by contradiction. It can be seen that
    \begin{eqnarray}\label{xy001}
        \min_{x\in X, y\in Y} f(x, y) \le \min_{x \in X} {\min_{y \in Y} f(x, y)}. \notag
    \end{eqnarray}
    
    Assuming that $\{x^*, y^*\}$ is the optimal solution of $ \min_{x\in X, y\in Y} f(x, y)$. That is, 

    \begin{eqnarray}\label{aa11}
      f(x^*, y^*) = \min_{x \in X} {\min_{y \in Y} f(x, y)}.
    \end{eqnarray}
    
    Suppose the claim~(\ref{claim000}) does not hold, and we assume that 
    \begin{eqnarray}\label{bb11}
        \min_{x\in X, y\in Y} f(x, y) < \min_{x \in X} {\min_{y \in Y} f(x, y)}. \notag
    \end{eqnarray}

    By Ineq.~(\ref{bb11}),  we have 
    \begin{eqnarray}
        f(x^*, y^*) < \min_{y \in Y} f(x^*, y). \notag
    \end{eqnarray}
    However, by Eq.~(\ref{aa11}), we have 
    \begin{eqnarray}
        f(x^*, y^*) \ge \min_{y \in Y} f(x^*, y). \notag
    \end{eqnarray}
    This leads to a contradiction. The lemma thus follows.
\end{proof}

It can be seen that the objective function of $P_5$ contains variables  $x_0$ and $R_{\max}$. Denote by $H(x_0, R_{\max})$ the objective function of $P_5$ which is defined as follows. 
\begin{eqnarray}
    H(x_0, R_{\max}) = \frac{Kx_0^3}{R_{\max}^2}  - \phi x_0 + \alpha R_{\max}.
\end{eqnarray}

To minimize $H(x_0, R_{\max})$, we first determine the range of $R_{\max}$ with respect to $x_0$ in order to meet Eq.~(\ref{eqRR}). We then transform problem $P_5$ into a problem with respect to $x_0$ only, by adopting  Lemma~\ref{lemma12}. The range of $R_{\max}$ is determined by the following lemma.
\begin{lemma} \label{lemmaintervalR}
The range of $R_{\max}$ with respect to $x_0$ is
\begin{eqnarray}
        (1-x_0)\bar{Q} \le R_{\max} \le (1-x_0)Q_u \label{eqMRM}
    \end{eqnarray}
    where
    \begin{align}
        & Q = \sum^{n}_{i=1} {1}/{q_i}\\
        & \bar{Q} = q_0 + 1 / Q, \label{eqM}\\
        & Q_u = q_0 + \max_{1\leq i\leq n}\{q_i\}.
    \end{align}
    $R_{\max}$ reaches the minimum $\bar{Q}(1-x_0)$ when
    \begin{eqnarray}
        x_i = \frac{1-x_0}{Q\cdot q_i} ~~~\text{for all $i$ with $1\leq i \leq n$}.\label{eqxiQ0}
    \end{eqnarray}
\end{lemma}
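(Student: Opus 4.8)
The plan is to exploit the fact that the term $q_0(1-x_0)$ is common to every $R_i$, so that, with $x_0$ held fixed, minimizing or maximizing $R_{\max}$ reduces to minimizing or maximizing $\max_{1\le i\le n} q_i x_i$ over the simplex $\sum_{i=1}^n x_i = 1-x_0$, $x_i\ge 0$ (Constraints~(\ref{eqx1}) and~(\ref{eqx2})). Writing $S = 1-x_0$, the identity
\begin{eqnarray}
    R_{\max} = q_0 S + \max_{1\le i\le n} q_i x_i \notag
\end{eqnarray}
follows directly from $R_i = q_i x_i + q_0(1-x_0)$ together with~(\ref{eqRR}), and it separates the problem into two one-sided bounds.

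For the upper bound, I would note that each feasible $x_i$ satisfies $x_i \le S$ (the $x_i$ are nonnegative and sum to $S$), hence $q_i x_i \le q_i S \le (\max_{1\le i\le n} q_i)\, S$ for every $i$, giving $R_{\max} \le (q_0 + \max_i q_i) S = Q_u(1-x_0)$. This bound is tight: loading all of $S$ onto the server with the largest $q_i$ attains it.

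The heart of the lemma is the lower bound, and here the key step is an equalization (harmonic-mean) argument. Setting $M = \max_{1\le i\le n} q_i x_i$, every feasible allocation obeys $q_i x_i \le M$, i.e. $x_i \le M/q_i$; summing over $i$ and using $\sum_{i=1}^n x_i = S$ yields
\begin{eqnarray}
    S \le M\textstyle\sum_{i=1}^n 1/q_i = MQ, \notag
\end{eqnarray}
so $M \ge S/Q = (1-x_0)/Q$ and therefore $R_{\max} \ge q_0 S + S/Q = \bar{Q}(1-x_0)$ by~(\ref{eqM}). Equality in the summed inequality forces $x_i = M/q_i$ for all $i$ (all the products $q_i x_i$ equal), which with $M = S/Q$ gives exactly the allocation $x_i = (1-x_0)/(Q q_i)$ of~(\ref{eqxiQ0}); this allocation is nonnegative and sums to $S$, so it is feasible and attains the minimum. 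Combining the two bounds yields the interval~(\ref{eqMRM}).

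I expect the only real subtlety to be the lower-bound step: one must recognize that balancing the products $q_i x_i$ across servers (rather than any greedy concentration) minimizes their maximum, and that this is proved cleanly by summing the constraints $x_i \le M/q_i$ rather than by invoking Lagrange multipliers. A minor additional point, if the word \emph{range} is to be justified in full, is that $R_{\max}$ is continuous in $(x_1,\dots,x_n)$ over the connected feasible simplex, so its image is the entire closed interval between the two attained endpoints rather than merely contained in it.
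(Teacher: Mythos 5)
Your proof is correct, and for the upper bound it matches the paper's argument essentially verbatim ($x_i \le 1-x_0$ and $q_i \le \max_i q_i$, with tightness by concentrating on the worst server). For the lower bound, however, you take a genuinely different and cleaner route. The paper proves that the optimum equalizes the products $q_i x_i$ by an exchange argument: it assumes an optimal allocation with $q_1x_1 \le \cdots \le q_nx_n$ and a strict gap $\epsilon = q_nx_n - q_1x_1 > 0$, perturbs by shifting $\epsilon/(x_1+x_n)$ of mass from $x_n$ to $x_1$, and derives a contradiction; only after establishing $q_1x_1 = \cdots = q_nx_n$ does it read off the value $(1-x_0)/Q$. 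Your summation argument --- $q_ix_i \le M$ gives $x_i \le M/q_i$, and summing against $\sum_i x_i = 1-x_0$ yields $M \ge (1-x_0)/Q$ --- proves the bound directly for \emph{every} feasible allocation and simultaneously characterizes the equality case as $x_i = M/q_i$, which is exactly~(\ref{eqxiQ0}). This buys you two things: it sidesteps the delicate verification (glossed over in the paper) that the perturbed allocation actually has a strictly smaller maximum and remains nonnegative, and it delivers the optimal allocation as a byproduct rather than as a separate computation. Your closing remark on continuity over the simplex, justifying that the word ``range'' means the full closed interval, is a point the paper does not address at all; it is a nice bit of rigor, though not needed for how the lemma is used downstream (only the two endpoints are ever invoked).
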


\begin{proof}
    Recall that $R_{\max} = \max_{1\leq i \leq n} \{R_i\}$ by Eq.~(\ref{eqRR}). Determining the range of $R_{\max}$ is equivalent to determining the upper and lower bounds of $\max_{1\leq i\leq n}\{R_i\}$ ($=\max_{1\leq i\leq n}\{q_ix_i + q_0(1-x_0)\}$), subject to that $\sum^{n}_{i=1}x_i = 1-x_0$ with $x_i \ge 0$.

    We first derive the upper bound of $R_{\max}$.
    As $x_i \le 1 - x_0$ and $q_i \le \max_{1 \le i \le n} \{q_i\}$, we have
        $\max_{1\le i \le n}\{q_ix_i\} \le \max_{1 \le i \le n}\{q_i\}(1-x_0)$.
    Hence,  $R_{\max} \le (1-x_0)Q_u$ holds.

We then derive the lower bound of $R_{\max}$, i.e., $\min \max_{1 \le i \le n}\{q_ix_i + q_0(1-x_0)\}$.  As term $q_0(1-x_0)$ is independent of $R_{\max}$, we will focus on term $q_ix_i$. In the following we show that $\min \max_{1 \le i \le n} \{q_ix_i\}$ reaches the minimum $(1-x_0)/Q$ when $x_i=\frac{1-x_0}{Q\cdot q_i}$ by contradiction.

Without loss of generality, we suppose that the optimal solution of $\min \max_{1\leq i \leq n} \{q_ix_i\}$ satisfies that $q_1x_1 \le q_2x_2 \le \cdots \le q_nx_n$. Then, $q_nx_n - q_1x_1 = \epsilon \geq 0$. We now show that $q_nx_n - q_1x_1 = 0$ by contradiction. Assume that $q_nx_n - q_1x_1 = \epsilon> 0$,  we then can have another feasible solution for $\min \max_{1\leq i \leq n} \{q_ix_i\}$ \break $\mathscr{X}^{'} = \{x_1^{'}, x_2', \cdots, x_{n-1}', x_n^{'}\}$, where
    \begin{eqnarray}
        x_1^{'} &=& x_1 + \frac{\epsilon}{x_1+x_n}, \notag \\
        x_i^{'}    &=& x_i, ~~~2 \le i \le n-1, \notag \\
        x_n^{'} &=& x_n - \frac{\epsilon}{x_1+x_n}. \notag
    \end{eqnarray}
    Since $q_nx_n^{'} < q_nx_n$, the value of $\min \max_{1 \le i \le n} \{q_ix_i\}$ under $\mathscr{X}^{'}$ is $\min\{q_{n-1}x_{n-1}, q_nx_n^{'}\} < q_nx_n$. $\mathscr{X}^{'}$ is better than $\mathscr{X}$. This leads to a contradiction that $\mathscr{X}$ is the optimal solution to the problem.  Therefore,
    \begin{eqnarray}
        q_1x_1 = q_2x_2 = \cdots = q_nx_n. \notag
    \end{eqnarray}
    That is, $\min \max_{1 \le i \le n} \{q_ix_i\}$ reaches the minimum $(1-x_0)/Q$
    when Eq.~(\ref{eqxiQ0}) holds. We thus have $R_{\max} \ge (1-x_0)/Q + q_0(1-x_0) = (1-x_0)\bar{Q}$, thereby $(1-x_0)\bar{Q} \le R_{\max}$. Inequalities~\ref{eqMRM} thus hold.     
\end{proof}

Since $R_{\max}$ is a function of $x_0$, $\min_{R_{\max}} H(x_0, R_{\max})$ is also a function of $x_0$ too.
\begin{lemma} \label{lemmaRequivalent}
    $\min_{R_{\max}}H(x_0, R_{\max})$ is equivalent to function $h(x)$, where $h(x)$ is defined as follows.
    \begin{eqnarray}
        h(x) =
        \begin{cases}
            h_1(x), x \in [0, \frac{\bar{Q}}{R^*+\bar{Q}}] \\
            h_2(x), x \in [\frac{Q_u}{R^*+Q_u}, 1] \\
            h_3(x), x \in [\frac{\bar{Q}}{R^*+\bar{Q}}, \frac{Q_u}{R^*+Q_u}]
        \end{cases}
    \end{eqnarray}
    where
    \begin{eqnarray}
        h_1(x) &=& \frac{K}{\bar{Q}^2}\frac{x^3}{(1-x)^2} - \phi x + \alpha (1-x)\bar{Q}, \\
        h_2(x) &=& \frac{K}{Q_u^2}\frac{x^3}{(1-x)^2} - \phi x + \alpha (1-x)Q_u, \\
        h_3(x) &=& (\frac{K}{{R^*}^2} - \phi + \alpha R^*)x, \label{eqh3} \\
        R^* &=& (\frac{2K}{\alpha})^{\frac{1}{3}}.
\end{eqnarray}
\end{lemma}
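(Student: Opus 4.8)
The plan is to fix $x_0 = x$ and reduce the claim to a one-dimensional minimization of $H$ over $R_{\max}$ on the feasible interval already supplied by Lemma~\ref{lemmaintervalR}, then read off the three pieces of $h$ according to where the minimizer lands. Since the term $-\phi x$ does not depend on $R_{\max}$, minimizing $H(x, R_{\max})$ over $R_{\max}$ is the same as minimizing $g(R) := Kx^3/R^2 + \alpha R$ over $R \in [(1-x)\bar{Q},\, (1-x)Q_u]$ and then subtracting $\phi x$. Conceptually this is exactly the inner minimization that Lemma~\ref{lemma12} lets us peel off: we are replacing the joint minimization over $(x_0, R_{\max})$ by an outer minimization over $x_0$ of the inner minimum over $R_{\max}$, and the present lemma records that inner minimum as a function of $x$.

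First I would analyze $g$ as a function of $R > 0$. Computing $g'(R) = -2Kx^3/R^3 + \alpha$ and $g''(R) = 6Kx^3/R^4 > 0$ shows that $g$ is strictly convex with a unique stationary point $R^\circ = x(2K/\alpha)^{1/3} = xR^*$, which is therefore its global minimizer. By convexity, the minimum of $g$ over the closed interval $[(1-x)\bar{Q},\, (1-x)Q_u]$ is attained at $R^\circ$ when $R^\circ$ lies inside the interval, and otherwise at whichever endpoint is nearer to $R^\circ$: the left endpoint when $R^\circ$ undershoots the interval and the right endpoint when it overshoots.

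Next I would translate these three geometric situations into conditions on $x$. The inequality $R^\circ < (1-x)\bar{Q}$ rearranges to $x(R^*+\bar{Q}) < \bar{Q}$, i.e.\ $x < \bar{Q}/(R^*+\bar{Q})$; similarly $R^\circ > (1-x)Q_u$ rearranges to $x > Q_u/(R^*+Q_u)$; and the interior case is the complementary middle range. In each case I would substitute the corresponding value of $R_{\max}$ back into $H$: the left endpoint $R_{\max} = (1-x)\bar{Q}$ gives $\frac{K}{\bar{Q}^2}\frac{x^3}{(1-x)^2} - \phi x + \alpha(1-x)\bar{Q} = h_1(x)$; the right endpoint $R_{\max} = (1-x)Q_u$ gives $h_2(x)$; and the interior minimizer $R_{\max} = xR^*$ collapses the cubic term to $Kx/{R^*}^2$ and yields the linear expression $\big(K/{R^*}^2 - \phi + \alpha R^*\big)x = h_3(x)$. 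This reproduces the stated piecewise form.

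The main point to verify carefully is that the three $x$-intervals are compatible with the feasibility interval of Lemma~\ref{lemmaintervalR} and are correctly ordered, so that they tile $[0,1]$ without a gap. For this one needs $\bar{Q} \le Q_u$, which holds because $1/Q = 1/\sum_i q_i^{-1} \le \min_i q_i \le \max_i q_i$, whence $q_0 + 1/Q \le q_0 + \max_i q_i$; combined with the monotonicity of $t \mapsto t/(R^*+t)$ (its derivative $R^*/(R^*+t)^2$ is positive) this yields $\bar{Q}/(R^*+\bar{Q}) \le Q_u/(R^*+Q_u)$. The adjacent pieces overlap only at the two breakpoints, where the endpoint and stationary-point formulas agree by continuity of $g$, so the definition of $h$ is consistent. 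Beyond this ordering check, the remainder is the routine endpoint/stationary-point substitution sketched above.
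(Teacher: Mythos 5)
Your proposal is correct and follows essentially the same route as the paper: for fixed $x_0=x$, locate the unique stationary point $xR^*$ of $H$ in $R_{\max}$, compare it against the feasible interval $[(1-x)\bar{Q},(1-x)Q_u]$ from Lemma~\ref{lemmaintervalR}, and substitute the resulting minimizer (left endpoint, right endpoint, or interior point) to obtain $h_1$, $h_2$, $h_3$ on the corresponding $x$-ranges. Your additional checks --- strict convexity of $R\mapsto Kx^3/R^2+\alpha R$ and the ordering $\bar{Q}\le Q_u$ ensuring the three intervals tile $[0,1]$ --- are details the paper leaves implicit, and they strengthen rather than change the argument.
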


\begin{proof}
    The partial derivative of $H(x_0, R_{\max})$ with respect to $R_{\max}$ is
        $\pdv{H(x_0, R_{\max})}{R_{\max}} = -2K\frac{x_0^3}{R_{\max}^3} + \alpha$.
    Let $\pdv{H(x_0, R_{\max})}{R_{\max}} = 0$, its solution is $R^*x_0$.
    If $R^*x_0 \ge (1-x_0)Q_u$, i.e., $\frac{Q_u}{R^*+Q_u} \le x_0 \le 1$, $H(x_0, R_{\max})$ is a monotonically decreasing function.
    Hence, we have $\min_{R_{\max}}H(x_0, R_{\max}) = H(x_0, (1-x_0)Q_u)$, which can be rewritten as $h_2(x)$.
    Similarly, we can obtain the results for $R^*x_0 \le (1-x_0)\bar{Q}$ and $(1-x_0)\bar{Q} \le R^*x_0 \le (1-x_0)Q_u$.
\end{proof}

Problem $P_5$ can be solved by finding the minimum value of $h(x)$, while the minimum value of $h(x)$ is the minimum one among $h_1(x)$, $h_2(x)$ and $h_3(x)$ that are defined as follows. For convenience, denote $\min h_i(x)$ with $1 \le i \le 3$ the minimum value of $h_i(x)$ in its domain.
\begin{lemma} \label{lemmaminh}
    The minimum values of $h_1(x), h_2(x)$ and $h_3(x)$ respectively are
    \begin{align}
        & \min_{x \in [0, \frac{\bar{Q}}{R^*+\bar{Q}}]} \{h_1(x) \}= h_1(x_0^*), \\
        & \min_{x \in [\frac{Q_u}{R^*+Q_u}, 1]} \{h_2(x)\} = h_2(\frac{R^*Q_u}{R^*+Q_u}), \\
        &\min_{x \in [\frac{\bar{Q}}{R^*+\bar{Q}}, \frac{Q_u}{R^*+Q_u}]} \{h_3(x)\}= h_3(\frac{R^*\bar{Q}}{R^*+\bar{Q}}),  \label{eqminh3}
    \end{align}
    where $x_0^*$ satisfies
    \begin{align}
       & y^* = \frac{x_0^*}{1-x_0^*}, \label{eqyx} \\
        & 2{y^*}^3 + 3{y^*}^2 = \frac{(\phi+\alpha \bar{Q})\bar{Q}^2}{K}. \label{eqyM}
    \end{align}
\end{lemma}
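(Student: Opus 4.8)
The plan is to minimise the three pieces $h_1,h_2,h_3$ one at a time over their respective subintervals, exploiting the fact that $h_1$ and $h_2$ are both instances of the single template $g(x)=\frac{K}{Q^2}\frac{x^3}{(1-x)^2}-\phi x+\alpha(1-x)Q$ (with $Q=\bar{Q}$ and $Q=Q_u$), whereas $h_3(x)=(\frac{K}{{R^*}^2}-\phi+\alpha R^*)x$ is linear. For the linear piece the minimum over its interval is attained at an endpoint whose identity is fixed by the sign of the slope, so once that slope is named there is nothing more to compute; the only remaining task there is to confirm which endpoint of $[\frac{\bar{Q}}{R^*+\bar{Q}},\frac{Q_u}{R^*+Q_u}]$ is selected.

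For the template $g$ I would differentiate. A short computation gives
\[
g'(x)=\frac{K}{Q^2}\,\frac{x^2(3-x)}{(1-x)^3}-\phi-\alpha Q .
\]
The decisive move is the substitution $y=\frac{x}{1-x}$, a strictly increasing change of variable on $[0,1)$, under which the awkward rational factor collapses, $\frac{x^2(3-x)}{(1-x)^3}=2y^3+3y^2$. Thus $g'(x)=0$ is equivalent to the cubic
\[
2y^3+3y^2=\frac{(\phi+\alpha Q)Q^2}{K},
\]
which for $Q=\bar{Q}$ is precisely Eq.~(\ref{eqyM}); together with Eq.~(\ref{eqyx}) it determines $x_0^*$. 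Since $2y^3+3y^2$ is strictly increasing on $[0,\infty)$, the cubic has a unique nonnegative root, so each of $h_1,h_2$ possesses exactly one stationary point.

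To certify that these stationary points are minima I would establish convexity rather than verify a second-order condition by hand: because $2y^3+3y^2$ increases in $y$ and $y$ increases in $x$, the composite $x\mapsto\frac{x^2(3-x)}{(1-x)^3}$ is increasing, hence $g'$ is increasing and $g$ is convex on $(0,1)$. Consequently the unique stationary point is the global minimiser of $g$ over $(0,1)$, and its minimiser over a subinterval equals that stationary point when the point lies inside and the nearer endpoint otherwise.

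The hard part will be this last localisation, namely deciding for each piece whether its stationary point falls inside the prescribed subinterval. I would settle it by comparing the root $y^*$ of the cubic with the $y$-images of the interval endpoints, $\bar{Q}/R^*$ and $Q_u/R^*$, simplifying with ${R^*}^3=2K/\alpha$; the comparison collapses to the sign of the slope $c=\frac{K}{{R^*}^2}-\phi+\alpha R^*$ of $h_3$. A clean way to organise the resulting cases is to note that the three pieces fit together into a single $C^1$, globally convex function on $(0,1)$: one checks that $h$ and $h'$ agree at the junctions $x=\frac{\bar{Q}}{R^*+\bar{Q}}$ and $x=\frac{Q_u}{R^*+Q_u}$, where in each case $h'$ reduces to $c$ (using $2K={R^*}^3\alpha$). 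Hence the interior minimum lands in the $h_1$ piece exactly when $c\ge 0$ and in the $h_2$ piece when $c\le 0$, while on the linear $h_3$ stretch the selected endpoint is dictated by the sign of $c$; assembling these cases yields the claimed minimisers and closes the proof.
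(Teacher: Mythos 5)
Your core computation is the same as the paper's: differentiate, substitute $y=x/(1-x)$ to collapse the derivative to the cubic $2y^3+3y^2=(\phi+\alpha Q)Q^2/K$, use strict monotonicity of $2y^3+3y^2$ to get a unique stationary point, and then decide where that point sits relative to the breakpoints $\bar{Q}/R^*$ and $Q_u/R^*$. Your packaging is somewhat cleaner than the paper's: you treat $h_1$ and $h_2$ as one template, observe that $h$ is globally convex and $C^1$ with junction slope $c=\frac{K}{{R^*}^2}-\phi+\alpha R^*$ (a correct and useful identity, since $h_1'$ at $x=\frac{\bar{Q}}{R^*+\bar{Q}}$ equals $\frac{3K}{{R^*}^2}-\phi=c$ using $2K=\alpha{R^*}^3$), whereas the paper argues each piece separately and locates the root of the cubic by a direct sign check of $h_4$ at the ends of $[0,\bar{Q}/R^*]$.

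There is, however, one genuine gap: you end with a case split on the sign of $c$ and say ``assembling these cases yields the claimed minimisers,'' but the lemma asserts one \emph{specific} outcome --- interior minimiser $x_0^*$ for $h_1$, left endpoints for $h_2$ and $h_3$ --- which is exactly the case $c>0$. If $c<0$ the minimiser of $h_1$ over $[0,\frac{\bar{Q}}{R^*+\bar{Q}}]$ would be the right endpoint, not $x_0^*$, so the sign of $c$ must be pinned down, not left as a parameter. The paper closes this by invoking its standing assumption (end of Section III.B) that the value of $P_1$ is positive, hence $h_3(x)>0$ on its domain, hence $c=\frac{K}{{R^*}^2}-\phi+\alpha R^*>0$; this is precisely the inequality $\phi<K/{R^*}^2+\alpha R^*$ used to show $h_4(\bar{Q}/R^*)>0$, i.e.\ that the root $y^*$ lies in $(0,\bar{Q}/R^*)$ and $x_0^*$ is interior to the $h_1$ interval. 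Add that one line and your argument is complete.
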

\begin{proof}
    Since function $h_3(x)>0$ defined in Eq.~(\ref{eqh3}), we have that $\frac{K}{{R^*}^2} + \alpha R^* > \phi$. Hence, 
    $\min h_3(x) = h_3(\frac{\bar{Q}}{R^*+\bar{Q}})$.

    We now find the minimum value of $h_1(x)$. The derivative of $h_1(x)$ is 
    \begin{eqnarray}
        \pdv{h_1(x)}{x} = \frac{K}{\bar{Q}^2}(2y^3 + 3y^2) - (\phi + \alpha \bar{Q}), \notag
    \end{eqnarray}
    where $y = x/(1-x)$.
    
    We define $h_4(y)$ as follows.
    \begin{eqnarray}
        h_4(y) = 2y^3 + 3y^2 - \frac{(\phi+\alpha \bar{Q})\bar{Q}^2}{K}, 0 \le y \le \frac{\bar{Q}}{R^*}. \notag
    \end{eqnarray}
    As $h_4(y)$ is a strictly monotonically increasing function with respect to $y$, we have $\min \{h_4(y)\} = h_4(0) < 0$. The rest is to show that $\max \{h_4(y)\} = h_4(\frac{\bar{Q}}{R^*}) > 0$.
    \begin{align*}
            & h_4(\frac{\bar{Q}}{R^*}) = 2(\frac{\bar{Q}}{R^*})^3 + 3(\frac{\bar{Q}}{R^*})^2 -\frac{(\phi+\alpha \bar{Q})\bar{Q}^2}{K} \\
                                 &> 2(\frac{\bar{Q}}{R^*})^3 + 3(\frac{\bar{Q}}{R^*})^2 - \frac{(K/{R^*}^2 + \alpha R^* + \alpha \bar{Q})\bar{Q}^2}{K} \\
                                 &= 0.
    \end{align*}
    Therefore, $h_4(y) = 0$ has the only solution $y^*$ satisfying Eq.~(\ref{eqyM}). Since $y$ is a monotonically increasing function with respect to $x$, $\pdv{h_1(x)}{x} = 0$ also has the only solution $x_0^*$.
    Therefore, $h_1(x)$ reaches its minimum $h_1(x_0^*)$ when Eq.~(\ref{eqyx}) and Eq.~(\ref{eqyM}) hold.
    Similarly, $h_2(x)$ is a monotonically increasing function of $x$ too, and reaches its minimum when $x=\frac{Q_u}{R^*+Q_u}$.
\end{proof}

\begin{theorem} \label{theoremOPTP1}
    The optimal solution to problem $P_1$ is $OPT(P_1^{\mathscr{S}_s})$ when  
    \begin{eqnarray} \label{eqOPTP1}
    OPT(P_1^{\mathscr{S}_s}) = 3K \frac{{y^*}^2}{\bar{Q}^2} - \phi,
    \end{eqnarray}
    where 
    \begin{eqnarray}
        x_0 &=& x_0^*, \\
        x_i &=& \frac{1-x_0^*}{Q\cdot q_i}, \label{eqxiQ}
    \end{eqnarray}

    and the  CPU-cycle frequency scheduling at the mobile device is
    \begin{eqnarray}
        f_w = \frac{B_0x_0^*}{(1-x_0^*)\bar{Q}},~~~\text{for all $w$ with $1\leq w \leq B_0x_0^*$}. \label{eqfwM}
    \end{eqnarray}
    and the corresponding overall delay is $(1-x_0^*)\bar{Q}$.
\end{theorem}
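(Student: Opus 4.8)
The plan is to assemble the reductions and per-piece minima already established. By Lemma~\ref{lemmaP1P5}, problem $P_1$ is equivalent to $P_5$, and by Lemma~\ref{lemma12} together with Lemma~\ref{lemmaRequivalent}, minimizing the objective $H(x_0,R_{\max})$ of $P_5$ reduces to minimizing the single-variable piecewise function $h(x)$ over $[0,1]$. Lemma~\ref{lemmaminh} already supplies the minimum of each piece $h_1,h_2,h_3$ on its subinterval, so $\min_{x\in[0,1]}h(x)$ is the smallest of these three values. The whole theorem then follows once I (i) show this smallest value is $h_1(x_0^*)$, (ii) evaluate $h_1(x_0^*)$ in closed form, and (iii) read off the corresponding $x_0$, $x_i$, $f_w$, and overall delay.

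For step (i) I would argue by continuity and monotonicity rather than by a direct three-way numerical comparison. The three subintervals meet at $x=\frac{\bar{Q}}{R^*+\bar{Q}}$ and $x=\frac{Q_u}{R^*+Q_u}$; at the first breakpoint $R^*x_0=(1-x_0)\bar{Q}$ and at the second $R^*x_0=(1-x_0)Q_u$, so the value of $R_{\max}$ used on each side coincides there, making $h$ continuous. On $[0,\frac{\bar{Q}}{R^*+\bar{Q}}]$, the analysis of $h_4(y)$ in Lemma~\ref{lemmaminh} shows $h_1$ strictly decreases up to $x_0^*$ and strictly increases afterwards, while $h_3$ and $h_2$ were shown there to be increasing on their intervals. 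Since $x_0^*\le\frac{\bar{Q}}{R^*+\bar{Q}}$, the function $h$ is non-increasing on $[0,x_0^*]$ and non-decreasing on $[x_0^*,1]$, so its global minimum over $[0,1]$ is attained at $x_0^*$ and equals $h_1(x_0^*)$.

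For step (ii), I would substitute $x_0^*=\frac{y^*}{1+y^*}$ and $1-x_0^*=\frac{1}{1+y^*}$ into $h_1$, using the identity $\frac{x_0^3}{(1-x_0)^2}=\frac{{y^*}^3}{1+y^*}$ to obtain $h_1(x_0^*)=\frac{1}{1+y^*}\big(\frac{K{y^*}^3}{\bar{Q}^2}-\phi y^*+\alpha\bar{Q}\big)$. Eliminating $\phi$ with the defining cubic $2{y^*}^3+3{y^*}^2=\frac{(\phi+\alpha\bar{Q})\bar{Q}^2}{K}$ of Eq.~(\ref{eqyM}) lets the factor $(1+y^*)$ cancel, leaving $\alpha\bar{Q}-\frac{2K{y^*}^3}{\bar{Q}^2}$; substituting $\frac{2K{y^*}^3}{\bar{Q}^2}=\phi+\alpha\bar{Q}-\frac{3K{y^*}^2}{\bar{Q}^2}$ from the same cubic yields exactly $3K\frac{{y^*}^2}{\bar{Q}^2}-\phi$, which is Eq.~(\ref{eqOPTP1}).

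Finally, for step (iii) the optimal assignment is recovered from the earlier lemmas: the minimizer is $x_0=x_0^*$, and because the $h_1$ regime corresponds to $R_{\max}$ taking its lower bound, Lemma~\ref{lemmaintervalR} forces the equalizing allocation $x_i=\frac{1-x_0^*}{Q\,q_i}$ with overall delay $R_{\max}=(1-x_0^*)\bar{Q}$; the frequency rule $f_w=B/R_{\max}$ fixed in the derivation of $P_5$ (with $B=B_0x_0^*$) then gives $f_w=\frac{B_0x_0^*}{(1-x_0^*)\bar{Q}}$. I expect the main obstacle to be step (i): justifying that the global optimum lies in the $h_1$ piece rather than in $h_2$ or $h_3$. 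The cleanest route is the continuity-plus-monotonicity argument above, which sidesteps comparing the three closed-form minima directly; the remaining algebra in step (ii) is routine once the cubic identity is applied.
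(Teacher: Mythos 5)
Your proposal is correct and follows essentially the same route as the paper: reduce via Lemmas~\ref{lemmaP1P5}, \ref{lemma12}, \ref{lemmaRequivalent}, and \ref{lemmaminh} to minimizing the piecewise function $h$, use continuity at the breakpoints together with the monotonicity of $h_2$ and $h_3$ (and the sign change of $h_1'$ at $x_0^*$) to conclude the global minimum is $h_1(x_0^*)$, and then apply the substitution $x_0^*=\frac{y^*}{1+y^*}$ with the cubic identity of Eq.~(\ref{eqyM}) to obtain $3K\frac{{y^*}^2}{\bar{Q}^2}-\phi$. Your phrasing of step (i) as unimodality of $h$ is a slightly cleaner packaging of the paper's piecewise comparison ($\min h_1\le\min h_3\le\min h_2$ via shared boundary values), but the underlying argument is the same.
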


\begin{proof}
    According to Lemmas~\ref{lemmaRequivalent} and~\ref{lemmaminh}, we have
    \begin{eqnarray}
        \min_{x_0, R_{\max}} H(x_0, R_{\max}) = \min \{h_1(x), h_2(x), h_3(x) \}. \notag
    \end{eqnarray}
    Since $x_0^* \in [0, \frac{\bar{Q}}{R^* + \bar{Q}}]$, we have
        $\min h_1(x) = h_1(x_0^*) \le h_1(\frac{R^*\bar{Q}}{R^*+\bar{Q}})$.
    As $h_3(x)$ is a monotonically increasing function, we have 
    $\min h_3(x)= h_3(\frac{R^*\bar{Q}}{R^*+\bar{Q}})$.

    Notice that $h_1(x)$ and $h_3(x)$ represent the same point when $x=\frac{R^*\bar{Q}}{R^*+\bar{Q}}$, thus we have $h_1(\frac{R^*\bar{Q}}{R^*+\bar{Q}}) = h_3(\frac{R^*\bar{Q}}{R^*+\bar{Q}})$.
    Therefore, $ \min h_1(x) \le \min h_3(x)$ holds.
    Similarly,  
    $\min {h_3(x)} \le \min {h_2(x)}$. 
    Therefore,
    we have $\min h_1(x) \le \min h_3(x) \le \min h_2(x)$ and
    \begin{eqnarray*}
        \min_{x_0, R_{\max}} H(x_0, R_{\max}) = h_1(x_0^*).
    \end{eqnarray*}
    Thus, the optimal solution to problem $P_5$ is $h_1(x_0^*)$, which is also the optimal solution to $P_1$ as $P_1$ to $P_5$ is equivalent by Lemma~\ref{lemmaP1P5}.
    By Lemma~\ref{lemmaintervalR}, $R_{\max}=(1-x_0^*)\bar{Q}$ when $x_i$ satisfies Eq.~(\ref{eqxiQ}). Meanwhile, $R_{\max} = R_i$ with $1 \le i \le n$. By the analysis of Case 2. in Section III.C, $R_{\max} = D_l$ and $f_w = B_0x_0^*/R_{\max}$. The overall delay $\max\{D_l, R_{\max}\} = (1-x_0^*)\bar{Q}$.

    We now rewrite $OPT(P_1^{\mathscr{S}_s})$ as a function of $\bar{Q}$. Eq.~(\ref{eqyx}) can be rewritten as
    \begin{eqnarray}
    x_0^* = \frac{y^*}{1+y^*}. \label{eqxy}
    \end{eqnarray}
    Hence, $OPT(P_1^{\mathscr{S}_s})$ can be simplified as
    \begin{eqnarray} \small
       OPT(P_1^{\mathscr{S}_s}) &=& \frac{K}{\bar{Q}^2}\frac{{x_0^*}^3}{(1-x_0^*)^2} -\phi x_0^* + \alpha \bar{Q}(1-x_0^*) \notag  \\
       &=& \frac{K}{\bar{Q}^2}\frac{{y^*}^3}{1+y^*} - \frac{\phi y^*}{1+y^*} + \frac{\alpha \bar{Q}}{1+y^*} \notag \\
       &=& \frac{K{y^*}^3-\phi \bar{Q}^2y^* + \alpha \bar{Q}^3}{\bar{Q}^2(1+y^*)} \label{eqaMy} \\
       &=& \frac{K{y^*}^3-\phi \bar{Q}^2y^* + (2K{y^*}^3+3K{y^*}^2-\phi \bar{Q}^2)}{\bar{Q}^2(1+y^*)} \notag \\
       &=& \frac{3K{y^*}^2(1+y^*)-\phi \bar{Q}^2(1+y^*)}{\bar{Q}^2(1+y_n)} \notag \\
       &=& 3K \frac{{y^*}^2}{\bar{Q}^2} - \phi. \notag
    \end{eqnarray}
    where term $\alpha \bar{Q}^3$ in Eq.~(\ref{eqaMy}) is replaced by $2K{y^*}^3+3K{y^*}^2-\phi \bar{Q}^2$, following Eq.~(\ref{eqyM}).
\end{proof}

\section{Algorithm for the delay-energy joint optimization problem}\label{sec04}
In this section we solve the original problem $P_0$, by making use of the result of problem $P_1$. The relationship between $P_0$ and $P_1$ is given as follows. Let $OPT(P_0)$ and $OPT(P_1^{\mathscr{S}_s})$ be the optimal solutions of $P_0$ and $P_1$ with the given server set $\mathscr{S}_s$, respectively. Notice that the set of servers $\mathscr{S}_s$ allocated to for the subtask executions of task $A(L,\tau_d)$ in problem $P_1$ is given, under which an optimal solution for problem $P_1$ is achieved. In other words, given a different server set $\mathscr{S}_s'$ for $P_1$, there is a different optimal solution $OPT(P_1^{\mathscr{S}_s'})$ for it.

The general strategy for problem $P_0$ is to reduce it to $P_1$ by fixing the server set for each offloading task $A(L,\tau_d)$. Since there are multiple subsets of servers satisfying the overall delay $\tau_d$ of task $A(L,\tau_d)$, the rest of this section will focus on identifying such a subset of servers that the optimal solution of $P_1$ is the minimum one, which corresponds an optimal solution to problem $P_0$. Denote by $OPT(P_1^{\mathscr{S}_{s_0}})$ the minimum one among $OPT(P_1^{\mathscr{S}_{s_0}})$ for all $\mathscr{S}_s$, i.e.,  
\begin{eqnarray}
OPT(P_1^{\mathscr{S}_{s_0}})=\min\{OPT(P_1^{\mathscr{S}_s}) | \mathscr{S}_s \subseteq \mathscr{S}, \ |\mathscr{S}_s|\leq m\}
\end{eqnarray}

Let $OPT_{local}$ be the minimum value of the overall cost if task $A(L,\tau_d)$ is entirely executed at the mobile device. 
Then, $OPT(P_0)$ is the minimum between $OPT_{local}$ and $OPT(P_1^{\mathscr{S}_{s_0}}) + \phi + E_t$, which is the minimum overall cost if the task is offloaded to MEC. The rest of this section thus focus on identifying the server set $s_0$.  

\subsection{Server selection}

We first derive the relationships between $OPT(P_1^{\mathscr{S}_s})$ and $\bar{Q}$ defined in ~(\ref{eqM}) and between $R_{\max}$ and $\bar{Q}$. We then find the server subset $s_0$ in the MEC to serve task $A(L, \tau_d)$ based on these relationships.
\begin{lemma} \label{lemmaOPTP1M}
   $OPT(P_1^{\mathscr{S}_s})$ is a monotonically increasing function with respect to $\bar{Q}$.
\end{lemma}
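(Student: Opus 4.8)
The plan is to treat $OPT(P_1^{\mathscr{S}_s})$ as an explicit function of $\bar{Q}$ using the closed form from Theorem~\ref{theoremOPTP1}, namely $OPT(P_1^{\mathscr{S}_s}) = 3K{y^*}^2/\bar{Q}^2 - \phi$, where $y^* = y^*(\bar{Q})$ is implicitly determined by the cubic $2{y^*}^3 + 3{y^*}^2 = (\phi + \alpha\bar{Q})\bar{Q}^2/K$ of Eq.~(\ref{eqyM}). Since $y^*$ depends on $\bar{Q}$ only through this relation, I would establish monotonicity by computing $\frac{d}{d\bar{Q}}OPT(P_1^{\mathscr{S}_s})$ via implicit differentiation and showing it is strictly positive on the admissible range (all of $\alpha,\phi,K,\bar{Q},y^*$ being positive).

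First I would differentiate the defining cubic with respect to $\bar{Q}$ to get $(6{y^*}^2 + 6y^*)\frac{dy^*}{d\bar{Q}} = (3\alpha\bar{Q}^2 + 2\phi\bar{Q})/K$, which immediately yields $\frac{dy^*}{d\bar{Q}} > 0$ since every factor is positive. Next I would differentiate the closed form to obtain $\frac{d}{d\bar{Q}}OPT(P_1^{\mathscr{S}_s}) = \frac{6Ky^*}{\bar{Q}^3}\left(\bar{Q}\frac{dy^*}{d\bar{Q}} - y^*\right)$, so the sign of the whole derivative reduces to the sign of the factor $\bar{Q}\frac{dy^*}{d\bar{Q}} - y^*$.

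The main obstacle is that this factor is not obviously positive: it is a difference of positive quantities, and after substituting $\frac{dy^*}{d\bar{Q}}$ it carries a $\bar{Q}^3$-term tangled with $y^*$. The key step is to use the cubic itself to eliminate the ${y^*}^3$ and ${y^*}^2$ terms. A short computation (substituting the implicit derivative, then replacing $2K{y^*}^3 + 3K{y^*}^2$ by $\phi\bar{Q}^2 + \alpha\bar{Q}^3$) collapses the numerator of $\bar{Q}\frac{dy^*}{d\bar{Q}} - y^*$ to exactly $3K{y^*}^2 - \phi\bar{Q}^2$, all higher-order terms cancelling. I would then recognize that $3K{y^*}^2 - \phi\bar{Q}^2 = \bar{Q}^2 \cdot OPT(P_1^{\mathscr{S}_s})$, which is strictly positive by the standing assumption that $\kappa\sum_{w=1}^{B_0x_0}{f_w}^2 > P_{tx}x_0L/r_{h,p}$, i.e.\ that the value of $P_1$ exceeds $0$. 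This gives $\frac{d}{d\bar{Q}}OPT(P_1^{\mathscr{S}_s}) = \frac{y^*(3K{y^*}^2 - \phi\bar{Q}^2)}{\bar{Q}^3({y^*}^2 + y^*)} > 0$, establishing that $OPT(P_1^{\mathscr{S}_s})$ is strictly increasing in $\bar{Q}$.

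If the implicit differentiation turned out to be unwieldy, an alternative would be to parametrize by $y^*$ instead of $\bar{Q}$, using that the cubic makes $\bar{Q}$ a strictly increasing function of $y^*$; however, because $\bar{Q}$ also appears inside $OPT(P_1^{\mathscr{S}_s})$, this route still hinges on the same cancellation, so I expect the direct implicit-differentiation argument above to be the cleanest.
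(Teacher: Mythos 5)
Your proposal is correct, and the cancellation you identify does go through: differentiating the cubic of Eq.~(\ref{eqyM}) gives $6K(y^{*2}+y^*)\,dy^*/d\bar{Q} = 2\phi\bar{Q}+3\alpha\bar{Q}^2$, and substituting this together with $2K{y^*}^3+3K{y^*}^2=\phi\bar{Q}^2+\alpha\bar{Q}^3$ into $\bar{Q}\,dy^*/d\bar{Q}-y^*$ indeed collapses the numerator to $3K{y^*}^2-\phi\bar{Q}^2=\bar{Q}^2\cdot OPT(P_1^{\mathscr{S}_s})$, which is positive by the paper's standing assumption that the value of $P_1$ exceeds $0$. The paper reaches the same conclusion by a different, differentiation-free route: it sets $\xi=y^*/\bar{Q}$, rewrites the objective as $OPT(P_1^{\mathscr{S}_s})=3K\xi^2-\phi$ and the cubic as $2K\bar{Q}\xi^3+3K\xi^2=\phi+\alpha\bar{Q}$, solves for $\bar{Q}=(3K\xi^2-\phi)/(\alpha-2K\xi^3)$ as in Eq.~(\ref{eqMd}), observes that numerator and denominator are both positive (the former by the same positivity assumption you use, the latter because $\bar{Q}>0$) with the numerator increasing and the denominator decreasing in $\xi$, and then invokes the inverse-function property to conclude that $\xi$, hence $3K\xi^2-\phi$, is increasing in $\bar{Q}$. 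The two arguments lean on exactly the same positivity fact; the paper's change of variable buys an explicit monotone relation between $\bar{Q}$ and $\xi$ that is reused later in Section~IV (e.g.\ to derive $\bar{Q}_{\max}$ from the constraint $B_0\xi\le f_{\max}$), whereas your implicit-differentiation argument is more self-contained and additionally yields the sign and an explicit formula for the derivative, at the cost of a slightly heavier computation. One small point of care in your version: you should note explicitly that $y^*>0$ (which follows from the cubic having a positive right-hand side and $h_4$ being increasing with $h_4(0)<0$), since the factor $y^*/({y^*}^2+y^*)$ in your final expression needs this to be well defined and positive.
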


\begin{proof}
   Denote by $\xi = y^*/\bar{Q}$ a function of $\bar{Q}$, clearly $\xi > 0$. $OPT(P_1^{\mathscr{S}_s})$ in Eq.~(\ref{eqOPTP1}) then can be rewritten as $OPT(P_1^{\mathscr{S}_s}) = 3K\xi^2 - \phi>0$, and Eq.~(\ref{eqyM}) can be similarly rewritten as
        $2K\frac{{y^*}^3}{\bar{Q}^3} + 3K\frac{{y^*}^2}{\bar{Q}^3} = \frac{\phi}{\bar{Q}} + \alpha$,
    which can be further simplified as
        $2K\bar{Q}\xi^3 + 3K \xi^2 = \phi + \alpha \bar{Q}$.
    Hence, $\bar{Q}$ can be expressed as
    \begin{eqnarray} \label{eqMd}
        \bar{Q} = \frac{3K\xi^2-\phi}{\alpha - 2K\xi^3}.
    \end{eqnarray}
   Since $3K\xi^2 - \phi > 0$, $\alpha - 2K\xi^3>0$. Thus, $\bar{Q}$ is a strictly monotonically increasing function of $\xi$. With the property of the inverse function, $\xi$ is also a strictly monotonically increasing function of $\bar{Q}$.
    Since $\xi > 0$, $OPT(P_1^{\mathscr{S}_s})$ is a monotonically increasing function of $\xi$. Hence, $OPT(P_1^{\mathscr{S}_s})$ is a monotonically increasing function of $\bar{Q}$.
\end{proof}

We now show that decreasing the value of $\bar{Q}$ reduces the overall delay of the execution of task $A(L,\tau_d)$ by the following lemma.

\begin{lemma} \label{lemmaRM}
    The overall delay of task $A(L, \tau_d)$ is a monotonically increasing function with respect to $\bar{Q}$.
\end{lemma}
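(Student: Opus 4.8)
The plan is to express the overall delay as an explicit function of a single auxiliary variable and then reduce the claim to a one-variable monotonicity check. By Theorem~\ref{theoremOPTP1} the overall delay, which I denote by $T$, equals $(1-x_0^*)\bar{Q}$; substituting $x_0^* = y^*/(1+y^*)$ from Eq.~(\ref{eqxy}) gives $1-x_0^* = 1/(1+y^*)$, so $T = \bar{Q}/(1+y^*)$. First I would reuse the substitution $\xi = y^*/\bar{Q}$ introduced in the proof of Lemma~\ref{lemmaOPTP1M}, which rewrites the delay as $T = \bar{Q}/(1+\xi\bar{Q}) = 1/(1/\bar{Q}+\xi)$.

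Next I would invert Eq.~(\ref{eqMd}) to obtain $1/\bar{Q} = (\alpha-2K\xi^3)/(3K\xi^2-\phi)$ and substitute, collapsing $T$ into a function of $\xi$ alone:
\begin{eqnarray}
T(\xi) = \frac{3K\xi^2-\phi}{\alpha + K\xi^3 - \phi\xi}. \notag
\end{eqnarray}
The convenient feature here is that the derivative of the denominator equals the numerator $3K\xi^2-\phi$, so after differentiation the numerator of $T'(\xi)$ simplifies to $6K\alpha\xi - 3K^2\xi^4 - \phi^2$. Since the denominator appears squared, the sign of $T'(\xi)$ is exactly the sign of this expression.

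The key step, and the only real obstacle, is showing this expression is positive, and the two sign constraints already established in the proof of Lemma~\ref{lemmaOPTP1M} are precisely what is needed. There we have $\xi>0$, $3K\xi^2-\phi>0$, and $\alpha-2K\xi^3>0$. Multiplying $\alpha>2K\xi^3$ by $6K\xi>0$ yields $6K\alpha\xi > 12K^2\xi^4$, hence $6K\alpha\xi - 3K^2\xi^4 > 9K^2\xi^4 = (3K\xi^2)^2 > \phi^2$, where the last step uses $3K\xi^2>\phi>0$. Therefore the numerator of $T'(\xi)$ exceeds $9K^2\xi^4-\phi^2>0$, so $T'(\xi)>0$ and $T$ is strictly increasing in $\xi$.

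Finally, Lemma~\ref{lemmaOPTP1M} already shows that $\xi$ is a strictly monotonically increasing function of $\bar{Q}$; composing this with the monotonicity of $T$ in $\xi$ gives that the overall delay $T$ is strictly monotonically increasing in $\bar{Q}$, as claimed.
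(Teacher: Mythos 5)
Your proof is correct, and it takes a genuinely different route from the paper's. The paper substitutes $y^*=\bar{Q}/R_{\max}-1$ into Eq.~(\ref{eqyM}) to obtain the implicit cubic relation~(\ref{eqMR}) between $\bar{Q}$ and $R_{\max}$, differentiates it implicitly to get $K_1\,\pdv{R_{\max}}{\bar{Q}}=K_2$, and then shows $K_1>0$ and $K_2>0$ by combining~(\ref{eqMR}) with $\bar{Q}\ge R_{\max}>0$. You instead reuse the parametrization $\xi=y^*/\bar{Q}$ and the closed form~(\ref{eqMd}) from Lemma~\ref{lemmaOPTP1M} to write the delay explicitly as $T(\xi)=(3K\xi^2-\phi)/(\alpha+K\xi^3-\phi\xi)$, exploit the fact that the denominator's derivative equals the numerator so that the sign of $T'$ reduces to that of $6K\alpha\xi-3K^2\xi^4-\phi^2$, and bound this below by $(3K\xi^2)^2-\phi^2>0$ using exactly the two sign constraints $3K\xi^2-\phi>0$ and $\alpha-2K\xi^3>0$ already established there; I checked the algebra (the denominator is also positive, being $(\alpha-2K\xi^3)+\xi(3K\xi^2-\phi)$) and the chain through ``$\xi$ increasing in $\bar{Q}$'' is sound. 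Your route is arguably cleaner: it avoids implicit differentiation and makes Lemmas~\ref{lemmaOPTP1M} and~\ref{lemmaRM} parallel consequences of the same one-variable parametrization. What the paper's route buys is the explicit relation~(\ref{eqMR}) between $\bar{Q}$ and $R_{\max}$, which is reused immediately afterwards in Section~\ref{sec04} to define the deadline-induced bound $\bar{Q}^*$ by setting $R_{\max}=\tau_d$; your argument would need that relation derived separately for that later step.
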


\begin{proof}
   By replacing $x_0$ in Eq.~(\ref{eqxy}) with $x_0=1- R_{\max}/\bar{Q}$ by Theorem~\ref{theoremOPTP1},  $y^*$ can be expressed as follows.  
\begin{eqnarray}
        y^* = \frac{\bar{Q}}{R_{\max}} - 1. \label{eqyMR}
\end{eqnarray}
Eq.~(\ref{eqyM}) can be rewritten by replacing $y^*$ with $\frac{\bar{Q}}{R_{\max}} - 1$ as follows.
\begin{eqnarray}
    2\bar{Q}^3 - 3 \bar{Q}^2R_{\max} + R_{\max}^3 = (\frac{\phi}{K}\bar{Q}^2 + \frac{\alpha}{K}\bar{Q}^3)R_{\max}^3. \label{eqMR}
\end{eqnarray}

   The differentiation of Eq.~(\ref{eqMR}) with respect to $\bar{Q}$ then is  
    \begin{eqnarray}\label{k1k2}
        K_1 \pdv{R_{\max}}{\bar{Q}} = K_2,
    \end{eqnarray}
    where
    \begin{eqnarray} \small
        K_1 &=& (\frac{\phi}{K}\bar{Q}^2+\frac{\alpha}{K}\bar{Q}^3-1)3R_{\max}^2+3\bar{Q}^2, \label{k111}\\
        K_2 &=& 6\bar{Q}^2 - 6\bar{Q}R_{\max} - 2\frac{\phi}{K}\bar{Q}R_{\max}^3 - 3\frac{\alpha}{K}\bar{Q}^2R_{\max}^3. \label{k222}
    \end{eqnarray}
    
    The rest is to show that both $K_1 > 0$ and $K_2 > 0$. As $R_{max} = (1-x^*)\bar{Q}$, $\bar{Q} \ge R_{\max} >0$. By Eq.~(\ref{k111}), we have 
    \begin{eqnarray*}
        K_1 > -3R_{\max}^2 + 3\bar{Q}^2 \ge 0.
    \end{eqnarray*}
    Meanwhile, by Eq.~(\ref{eqMR}), we have
    \begin{eqnarray} \label{eqK2}
         2\bar{Q}^2 - 3 \bar{Q}R_{\max} + \frac{R_{\max}^3}{\bar{Q}}= (\frac{\phi}{K}\bar{Q} + \frac{\alpha}{K}\bar{Q}^2)R_{\max}^3.
    \end{eqnarray}
    Following Eq.~(\ref{k222}) and Eq.~(\ref{eqK2}), the following inequality holds. 
    \begin{eqnarray*} \small
        K_2 &>& 6\bar{Q}^2 - 6\bar{Q}R_{\max} - 3(\frac{\phi}{K}\bar{Q}R_{\max}^3 + \frac{\alpha}{K}\bar{Q}^2R_{\max}^3) \\
        & = & 3\bar{Q}R_{\max} - 3\frac{R_{\max}^3}{\bar{Q}} \\
        & = & 3R_{\max}\frac{\bar{Q}^2 - R_{\max}^2}{\bar{Q}} \ge 0
    \end{eqnarray*}
    Having both $K_1>0$ and $K_2>0$, by Eq~(\ref{k1k2}), we have 
    \begin{eqnarray*}
        \pdv{R_{\max}}{\bar{Q}} > 0.
    \end{eqnarray*}
    By Theorem~\ref{theoremOPTP1}, the overall delay of the execution of task $A(L, \tau_d)$ is equal to $R_{\max}$, the overall delay thus is a monotonically increasing function of $\bar{Q}$.
\end{proof}

Lemma~\ref{lemmaOPTP1M} and Lemma~\ref{lemmaRM} indicate that a smaller $\bar{Q}$ will result in a less overall delay and cost on the execution of task $A(L, \tau_d)$. In the following we aim to determine the range of $\bar{Q}$.

We first determine an upper bound on $\bar{Q}$ to meet the overall delay requirement of task $A(L, \tau_d)$. We have $(2 - \frac{\alpha}{K}\tau_d^3){\bar{Q}}^3 - (3\tau_d+\frac{\phi}{K}\tau_d^3){\bar{Q}}^2 + \tau_d^3 = 0$ by replacing $R_{\max}$ in Eq.~(\ref{eqMR}) with $\tau_d$, and let $\bar{Q}^*$ be the value of $\bar{Q}$ to ensure the equality holds. By Lemma~\ref{lemmaRM} and the property of the inverse function, $\bar{Q}$ is a monotonically increasing function of $R_{\max}$. Since $R_{\max} \le \tau_d$, we have $\bar{Q} \le \bar{Q}^*$.

We then determine another upper bound on $\bar{Q}$ to meet the CPU-cycle frequency constraint.
By Theorem~\ref{theoremOPTP1}, we have
\begin{eqnarray}
    f_w = \frac{B_0x_0^*}{\bar{Q}(1-x_0^*)} \le f_{\max}. \label{eqfwMf}
\end{eqnarray}

By Eq.~(\ref{eqyx}) and $\xi = y^*/\bar{Q}$ in Lemma~\ref{lemmaOPTP1M}, Inequality~(\ref{eqfwMf}) can be rewritten as $B_0\xi \le f_{\max}$. Thus, 
    $\xi \le \frac{f_{\max}}{B_0}$.

Eq.~(\ref{eqMd}) can be rewritten by replacing $\xi$ with $f_{\max}/B_0$ as follows.
\begin{eqnarray}
    \bar{Q} = \frac{3Kf_{\max}^2B_0-\phi B_0^3}{\alpha B_0^3 - 2Kf_{\max}^3}.
\end{eqnarray}
Since $\bar{Q}$ is a monotonically increasing function of $\xi$ by Lemma~\ref{lemmaOPTP1M}, $\bar{Q}$ should be no greater than $\frac{3Kf_{\max}^2B_0-\phi B_0^3}{\alpha B_0^3 - 2Kf_{\max}^3}$. Let $\bar{Q}_{\max} = \frac{3Kf_{\max}^2B_0-\phi B_0^3}{\alpha B_0^3 - 2Kf_{\max}^3}$.
Therefore, the range of $\bar{Q}$ is $\bar{Q} \le \min \{\bar{Q}^*, \bar{Q}_{\max}\}$.

Following its definition, the derivative of $\bar{Q}$ with respect to $q_i$, where $1 \le i \le n$, is
\begin{eqnarray} \label{eqQq}
    \pdv{\bar{Q}}{q_i} = (\frac{1}{q_i^2})(\sum\limits^{n}_{i=1}\frac{1}{q_i})^{-1} > 0.
\end{eqnarray}

Eq.~(\ref{eqQq}) shows that $\bar{Q}$ is an increasing function of $q_i$, which implies that a smaller $q_i$ will correspond a smaller $\bar{Q}$ for each $i$ with $1\leq i \leq n$.  Thus,  to minimize $\bar{Q}$, we should choose the first smallest $n$ servers from the $N$ servers in terms of the value of their $q_i$.

Let $j_1, j_2, \cdots, j_N$ be the index sequence of servers by sorting their $q)i$ in increasing order, i.e., $q_{j_t} \le q_{j_{t+1}}$ with $1 \le t \le N-1$.
Denote by $Q(n)$ for any $n$ with $1\leq n\leq m\leq N$, which is defined as follows. 
\begin{eqnarray}\label{qn001}
    Q(n) = q_0 + (\sum\limits^{n}_{i=1}\frac{1}{q_{j_i}})^{-1}.
\end{eqnarray}
Recall that the number of servers selected to serve task $A(L, \tau_d)$ cannot exceed $m$, i.e., $n \le m$.
Since $\sum^{n}_{i=1}\frac{1}{q_{j_i}} \le \sum^{m}_{i=1}\frac{1}{q_{j_i}}$, we have
\begin{eqnarray} \label{eqMmin}
    Q(m) \le Q(n), ~~~1 \le n \le m.
\end{eqnarray}

Function $Q(n)$ in Inequality~(\ref{eqMmin}) reaches the minimum when $n = m$. Therefore, the number of servers selected to serve task $A(L, \tau_d)$ should be $m$ to achieve the minimum overall cost  $OPT(P_1)$ of its execution by Lemma~\ref{lemmaOPTP1M}. Therefore, we have the following lemma.

\begin{lemma} \label{lemmasubset}
Let  $s_0=\{s_{j_1}, \cdots, s_{j_m}\}$ be the set of servers allocated to task $A(L,\tau_d)$ in problem $P_1$ to achieve the minimum cost $OPT(P_1^{\mathscr{S}_{s_0}})$.  If $Q(m) \le \min\{\bar{Q}^*, {Q}_{\max}\}$, then  
    \begin{eqnarray} \label{eqoptp1min}
        OPT(P_1^{\mathscr{S}_{s_0}}) = 3K\frac{(y_m^*)^2}{Q(m)^2} - \phi,
    \end{eqnarray}
    where
    $y_m^*$ is the value of variable $y_m$ in the following equation.
    \begin{eqnarray} \label{eqyMm}
        2{y_m}^3 + 3{y_m}^2 = \frac{(\phi+\alpha Q(m))Q(m)^2}{K}.
    \end{eqnarray}
\end{lemma}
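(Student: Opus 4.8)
The plan is to assemble the closed-form optimum of Theorem~\ref{theoremOPTP1} with the monotonicity results of Lemma~\ref{lemmaOPTP1M} and Lemma~\ref{lemmaRM} and with the server-selection analysis carried out immediately before the statement. The central observation is that, once a server subset $\mathscr{S}_s$ is fixed, the optimal value $OPT(P_1^{\mathscr{S}_s})$ depends on $\mathscr{S}_s$ only through the single scalar $\bar{Q}$, via Eq.~(\ref{eqOPTP1}) together with the defining relation Eq.~(\ref{eqyM}). Since by Lemma~\ref{lemmaOPTP1M} the map $\bar{Q} \mapsto OPT(P_1^{\mathscr{S}_s})$ is strictly increasing, minimizing the cost over all admissible server subsets is equivalent to minimizing $\bar{Q}$ over those subsets. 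This reduces a combinatorial search to a one-dimensional minimization.

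First I would reduce the subset choice to the choice of $\bar{Q}$. Because $\bar{Q} = q_0 + (\sum_{i=1}^n 1/q_i)^{-1}$, Eq.~(\ref{eqQq}) shows $\bar{Q}$ is increasing in each $q_i$; hence, among all subsets of a fixed cardinality $n$, the value of $\bar{Q}$ is smallest exactly when the $n$ servers with the smallest $q$-values are chosen, giving $\bar{Q} = Q(n)$ as in Eq.~(\ref{qn001}). Next I would let $n$ vary: since $\sum_{i=1}^n 1/q_{j_i}$ is nondecreasing in $n$, Inequality~(\ref{eqMmin}) yields $Q(m) \le Q(n)$ for all $n \le m$, so the global minimum of $\bar{Q}$ over all admissible subsets (of size at most $m$) is attained at $n=m$ by the set $s_0 = \{s_{j_1}, \ldots, s_{j_m}\}$, with $\bar{Q} = Q(m)$.

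The hypothesis $Q(m) \le \min\{\bar{Q}^*, \bar{Q}_{\max}\}$ enters precisely in the feasibility check. By Lemma~\ref{lemmaRM} the overall delay is increasing in $\bar{Q}$, so $\bar{Q} = Q(m) \le \bar{Q}^*$ guarantees the delay does not exceed $\tau_d$; likewise, the frequency relation Eq.~(\ref{eqfwMf}) is increasing in $\bar{Q}$ (the quantity $\xi = y^*/\bar{Q}$ from the proof of Lemma~\ref{lemmaOPTP1M} is increasing in $\bar{Q}$), so $\bar{Q} = Q(m) \le \bar{Q}_{\max}$ keeps $f_w$ within $f_{\max}$. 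Hence $s_0$ is feasible and, being the minimizer of $\bar{Q}$ among feasible sets, minimizes the cost. Substituting $\bar{Q} = Q(m)$ into Eq.~(\ref{eqOPTP1}) gives Eq.~(\ref{eqoptp1min}), and substituting the same value into Eq.~(\ref{eqyM}) gives the defining equation Eq.~(\ref{eqyMm}) for $y_m^*$.

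I expect the main obstacle to be the feasibility argument rather than the minimization itself. The point is that \emph{both} active constraints must translate into upper bounds on $\bar{Q}$: the delay bound through the monotonicity of Lemma~\ref{lemmaRM}, and the frequency bound through the monotonicity of $\xi$ in $\bar{Q}$. Once this is established, decreasing $\bar{Q}$ can never violate a constraint, so the smallest achievable value $Q(m)$ is automatically admissible under the stated hypothesis and is therefore optimal. The remaining optimization step is essentially bookkeeping, as the closed form, the monotonicity in $\bar{Q}$, and the monotonicity of $\bar{Q}$ in the $q_i$ have all already been proved.
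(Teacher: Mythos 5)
Your proposal is correct and follows essentially the same route as the paper: the paper's own (very terse) proof likewise rests on Inequality~(\ref{eqMmin}) to identify $s_0$ as the $m$ servers with smallest $q_i$, on the monotonicity of $OPT(P_1^{\mathscr{S}_s})$ and of the delay in $\bar{Q}$ (Lemmas~\ref{lemmaOPTP1M} and~\ref{lemmaRM}) together with the bounds $\bar{Q}^*$ and $\bar{Q}_{\max}$ derived just before the lemma, and then substitutes $\bar{Q}=Q(m)$ into Theorem~\ref{theoremOPTP1}. You have simply made explicit the feasibility and exchange arguments that the paper leaves implicit in the surrounding discussion.
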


\begin{proof}
 The subset of servers $\{s_{j_1}, \cdots, s_{j_m}\}$ can be identified by Ineq.~(\ref{eqMmin}).  By Theorem~\ref{theoremOPTP1}, Lemma~\ref{lemmasubset} then follows.
\end{proof}


\subsection{Algorithm}

Following the discussion in the very beginning in this section, the optimal solution $OPT(P_0)$ to problem $P_0$ is the minimum one between $OPT_{local}$ and $OPT(P_1^{\mathscr{S}_{s_0}}) +E_t+\phi$, while $OPT_{local}$ can be expressed as
\begin{eqnarray}
     \textstyle OPT_{local} = \min \{\kappa \sum^{B_0}_{w=1} (f_w)^2 + \alpha \sum^{B_0}_{w=1} (f_w)^{-1}\}.
\end{eqnarray}
By Eq.~(\ref{eqFfw}) and Eq.~(\ref{eqfm}), 
\begin{eqnarray}\label{eqOPTLOCAL}
    OPT_{local} = B_0(\kappa \bar{f}^2 + \alpha \bar{f}^{-1}).
\end{eqnarray}
Therefore, we have
\begin{eqnarray} \label{eqOPTP00}
    OPT(P_0) = \min \{\frac{3K{y_m^*}^2}{Q(m)^2} + E_t,\  B_0(\kappa \bar{f}^2 + \alpha \bar{f}^{-1})\}.
\end{eqnarray}

The detailed algorithm for problem $P_0$ thus is given in {\tt Algorithm}~\ref{alg01}.

\begin{algorithm}[t!] \footnotesize
    \caption{\tt Task Offloading Scheduling (TOS)}
    \label{alg01}
    \begin{algorithmic}[1]
    \REQUIRE $N$, $m$ and $q_i$ with $1\leq i \leq N$.
    \ENSURE $x_i$ and $f_w$.
    \STATE $x_i \leftarrow 0$, where $i = 0, 1 ,\cdots, N$;
    \STATE A sequence of servers ${j_1}, {j_2}, \cdots, {j_N}$ is obtained by sorting their $q_i$ in increasing order;
    \STATE Calculate $Q(m)$ and $y_m^*$ by Eq.~(\ref{qn001}) and~(\ref{eqyMm}), respectively;
    \STATE Calculate $OPT(P_1^{\mathscr{S}_{s_0}})$ by Eq.~(\ref{eqoptp1min});
    \STATE Calculate $OPT_{local}$ by Eq.~(\ref{eqOPTLOCAL});
    \STATE Calculate $OPT(P_0)$ by Eq.~(\ref{eqOPTP00});
    \IF{$OPT(P_0) = OPT_{local}$}
    \RETURN $x_0 \gets 1$ and $f_w \gets \bar{f}$, EXIT.
    \ELSE
    \STATE Calculate $x_0^*$ by Eq.~(\ref{eqxy});
    \STATE $Q \gets \sum^{m}_{i=1} {1}/{q_{j_i}}$;
    \STATE Calculate $x_{j_k}$ for all $k$ with $1\leq k \leq m$ by Eq.~(\ref{eqxiQ});
    \STATE Calculate $f_w$ by Eq.~(\ref{eqfwM});
    \RETURN $x_{j_k}$ for all $k$ with $1\leq k \leq m$, and $f_w$.
    \ENDIF
    \end{algorithmic}
\end{algorithm}

\setlength{\belowcaptionskip}{-15pt}


\begin{theorem}
{\tt Algorithm}~\ref{alg01} for the delay-energy joint optimization problem in an MEC delivers an optimal solution if $Q(m) \le \min\{\bar{Q}^*, {Q}_{\max}\}$, and its time complexity is $O(N\log N)$, where $N$ is the number of servers in the MEC.
\end{theorem}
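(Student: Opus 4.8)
The plan is to establish the two assertions — optimality and the $O(N\log N)$ running time — separately, leaning on the lemmas already proved in Sections~III and~IV. For optimality I would begin from the decomposition set up at the start of Section~IV, namely $OPT(P_0) = \min\{OPT_{local},\ OPT(P_1^{\mathscr{S}_{s_0}}) + \phi + E_t\}$: every feasible solution either executes the whole task locally, with cost $OPT_{local}$ given in closed form by Eq.~(\ref{eqOPTLOCAL}), or offloads part of it, in which case the cost equals that of the relaxed problem $P_1$ plus the two terms $\phi + E_t$ dropped during relaxation. The crux is then to verify that Algorithm~\ref{alg01} evaluates the offloading branch correctly, i.e. that Step~2 identifies the server subset minimizing $OPT(P_1^{\mathscr{S}_s})$ over all admissible subsets. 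By Lemma~\ref{lemmaOPTP1M}, $OPT(P_1^{\mathscr{S}_s})$ is strictly increasing in $\bar{Q}$, so minimizing the cost is equivalent to minimizing $\bar{Q}$; by the derivative computation in Eq.~(\ref{eqQq}) together with Inequality~(\ref{eqMmin}), $\bar{Q}$ is minimized by selecting the $m$ servers of smallest $q_i$, giving $\bar{Q} = Q(m)$. Hence the subset $s_0 = \{s_{j_1},\dots,s_{j_m}\}$ is optimal, and substituting $\bar{Q} = Q(m)$ into Theorem~\ref{theoremOPTP1} yields the value in Eq.~(\ref{eqoptp1min}).

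The main obstacle is the feasibility argument, which is precisely where the hypothesis $Q(m)\le\min\{\bar{Q}^*,Q_{\max}\}$ enters, and I would argue that this single scalar condition certifies both constraints of $P_0$ at once. On one hand, by Lemma~\ref{lemmaRM} the overall delay is increasing in $\bar{Q}$, and $\bar{Q}^*$ is defined as the value at which the delay equals $\tau_d$; thus $Q(m)\le\bar{Q}^*$ guarantees $R_{\max}\le\tau_d$, i.e. the deadline constraint~(\ref{eqDRD}). On the other hand, the frequency assignment of Theorem~\ref{theoremOPTP1} must respect $f_w\le f_{\max}$; rewriting this bound as $\bar{Q}\le Q_{\max}$ via Eq.~(\ref{eqfwMf}) shows that $Q(m)\le Q_{\max}$ enforces constraint~(\ref{eqfw}). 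The delicate point is that this establishes not merely that \emph{some} admissible offloading solution exists, but that the \emph{unconstrained} $\bar{Q}$-minimizer is itself feasible, so the constrained and unconstrained optima coincide and no better admissible subset can exist.

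With feasibility in hand, the offloading cost equals $OPT(P_1^{\mathscr{S}_{s_0}}) + \phi + E_t$, and taking the minimum against $OPT_{local}$ as in Eq.~(\ref{eqOPTP00}) returns $OPT(P_0)$; Steps~7--14 then emit the matching assignment of $x_i$ and $f_w$ (the former nonnegative by construction, since $x_i = (1-x_0^*)/(Q\,q_i)\ge 0$), so the output is an optimal solution to $P_0$.

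For the running time I would charge the cost of each line of Algorithm~\ref{alg01}. Initialization in Step~1 is $O(N)$; sorting the $q_i$ in Step~2 is $O(N\log N)$; forming $Q(m)$ by Eq.~(\ref{qn001}) is a sum of at most $m\le N$ reciprocals, hence $O(N)$, while solving the cubic~(\ref{eqyMm}) for $y_m^*$ is $O(1)$ as it admits a closed-form (or constant-iteration) root; the remaining evaluations of $OPT(P_1^{\mathscr{S}_{s_0}})$, $OPT_{local}$, $OPT(P_0)$, $x_0^*$, $Q$, the $m$ allocations $x_{j_k}$, and $f_w$ are each $O(1)$ or $O(N)$. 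The sort therefore dominates, giving total time $O(N\log N)$ and completing the proof.
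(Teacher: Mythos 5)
Your proposal is correct and follows essentially the same route as the paper: the paper's own proof simply invokes Theorem~\ref{theoremOPTP1} and Lemma~\ref{lemmasubset} and notes that sorting dominates the running time, while the supporting steps you spell out (the local-versus-offload decomposition, minimizing $\bar{Q}$ via Lemma~\ref{lemmaOPTP1M} and Eq.~(\ref{eqQq}), and the feasibility role of $Q(m)\le\min\{\bar{Q}^*,Q_{\max}\}$) are exactly the arguments the paper develops in the text of Section~IV leading up to that proof. Your version is a more explicit and self-contained rendering of the same argument, with no substantive divergence.
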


\begin{proof}
    By Theorem~\ref{theoremOPTP1} and Lemma~\ref{lemmasubset}, Algorithm~\ref{alg01} can find an optimal solution to the delay-energy joint optimization problem if $Q(m) \le \min\{\bar{Q}^*, {Q}_{\max}\}$. Its time complexity analysis is quite obvious, which is $O(N \log N)$ due to sorting.
\end{proof}

\section{Performance Evaluation}\label{sec05}
In this section, we evaluate the performance of the proposed algorithm for the delay-energy joint optimization problem through simulations. We also investigate the impacts on important parameters on the performance of the proposed algorithm.

\subsection{Experimental settings}
We set the number of servers in an MEC at $N = 100$, the data transmission rate between the AP and a server $r_i$ is uniformly distributed between 100~$Mbps$ and 1~$Gbps$. The computing capacity of a server $c_i$ is uniformly distributed between 1~GHz and 4~GHz. The transmission rate between the mobile device and the AP is set at $r_{h,p} = 2.5$~Mbps and the transmission power of the mobile device $P_{tx} = 0.5$~W. Besides, $E_t = 0.15$~J, $\kappa = 10^{-26}$, $f_{\max} = 2$~GHz, and the size of task $A(L,\tau_d)$ is $L = 50$~KB~\cite{MZL16}. Assume that the computation intensity of the task is 700 cycles per bit~\cite{MN10}. Under these experimental settings, $Q(m) \le \min\{\bar{Q}^*, Q_{\max}\}$ holds.

To evaluate the proposed algorithm, {\tt Algorithm} 1,  three benchmark algorithms are introduced as follows. 

 \begin{itemize}
     \item \textbf{Local\_Execution:} The entire task is executed at the mobile device and the CPU-cycle frequencies at the mobile device are scheduled to minimize the overall cost. The optimal value of this policy is $OPT_{local}$.

    \item \textbf{MEC\_Execution:} The entire task is offloaded to MEC. Then the subset of servers to serve the task and the task allocation among the selected servers are both optimal. The difference between MEC Execution and TOS is that $x_0 = 0$ in MEC Execution policy.

    \item \textbf{Mixed\_Execution:} The task is executed in both the mobile device and MEC. The difference between algorithms {\tt Mixed\_Execution} and {\tt TOS} is in the setting of $x_0 = 1 / (1 +m)$ in algorithm {\tt Mixed\_Execution}.
     \end{itemize}

\begin{figure*}[t!]
    \centering
  \subfloat[The overall cost on different $m$]{%
       \includegraphics[width=0.4\linewidth]{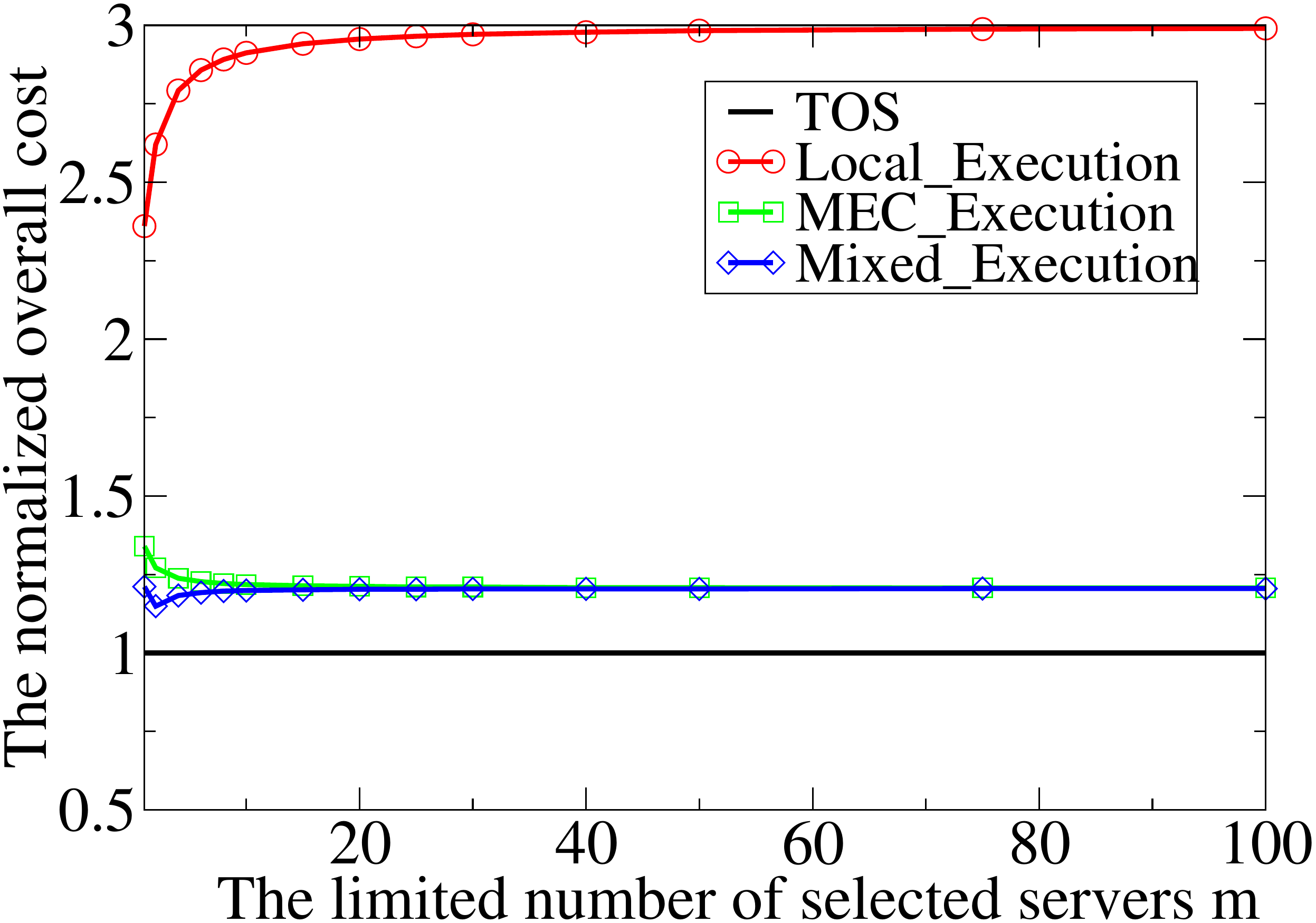}} \label{fig:weightoverallm}
  \quad\quad\quad\quad\subfloat[The overall cost on different weight $\alpha$]{%
       \includegraphics[width=0.4\linewidth]{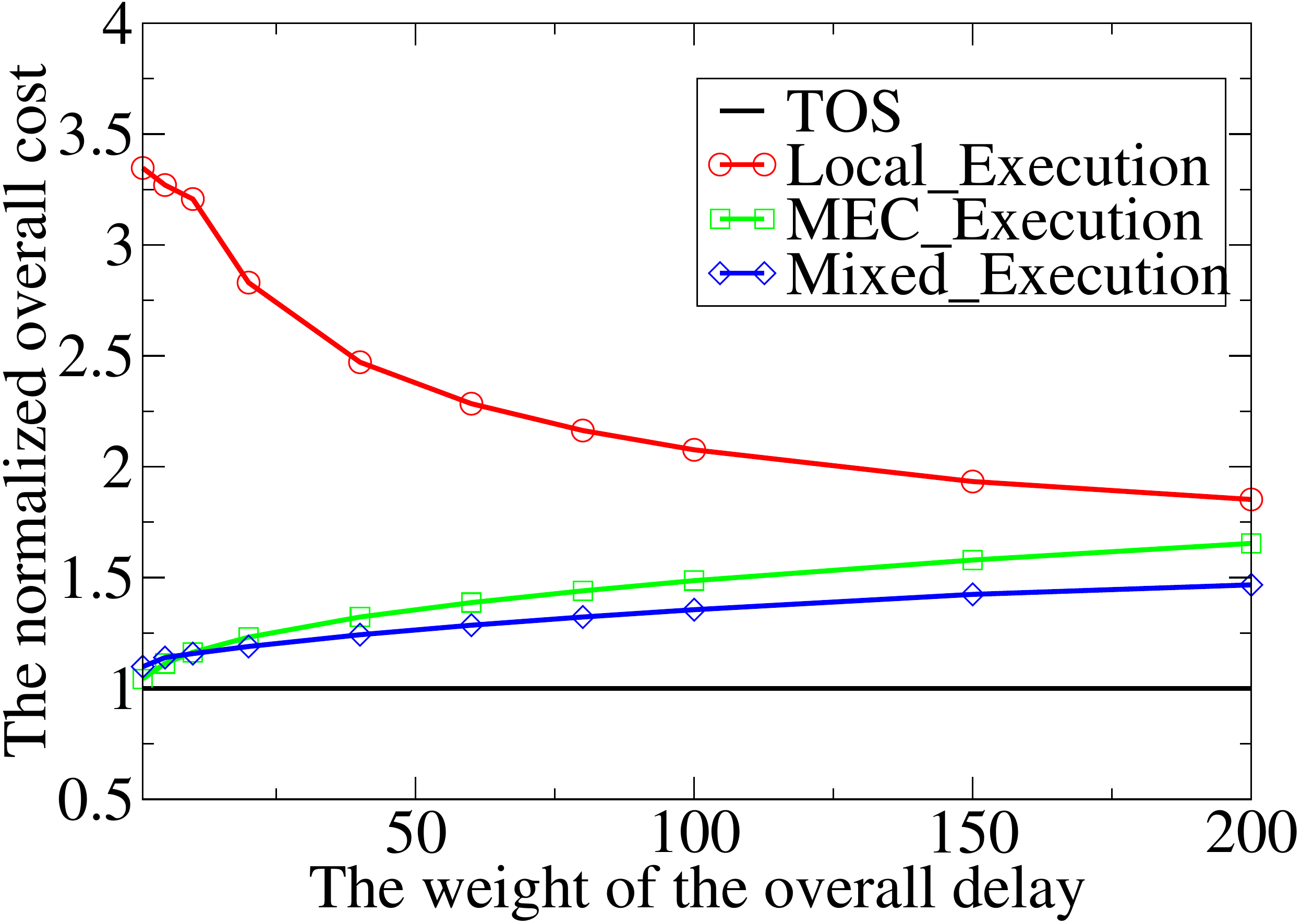}} \label{fig:weightoveralla}
\vspace{-2mm}
\caption{The performance of algorithms {\tt TOS}, {\tt Local \_Execution}, {\tt MEC \_Execution} and {\tt Mixed \_Execution}.}
\vspace{4mm}
\label{fig:performance}
\end{figure*}

\begin{figure*}[t!]
    \centering
  \quad\subfloat[Impact of $m$ on the overall delay]{%
        \includegraphics[width=0.42\linewidth]{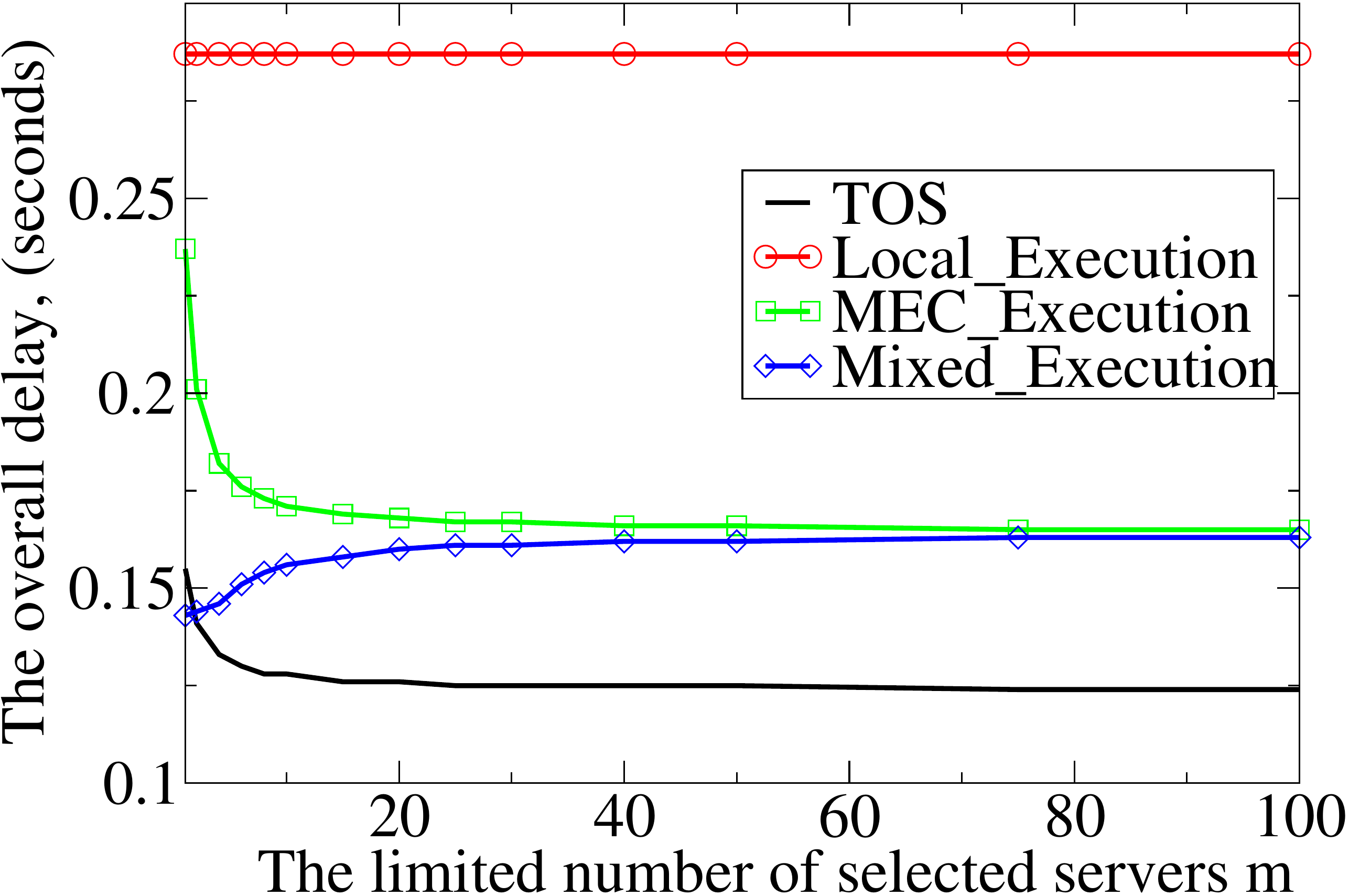}}
    \label{fig: limitdelay}\quad\quad\quad\quad
  \subfloat[Impact of $m$ on the energy consumption]{%
        \includegraphics[width=0.4\linewidth]{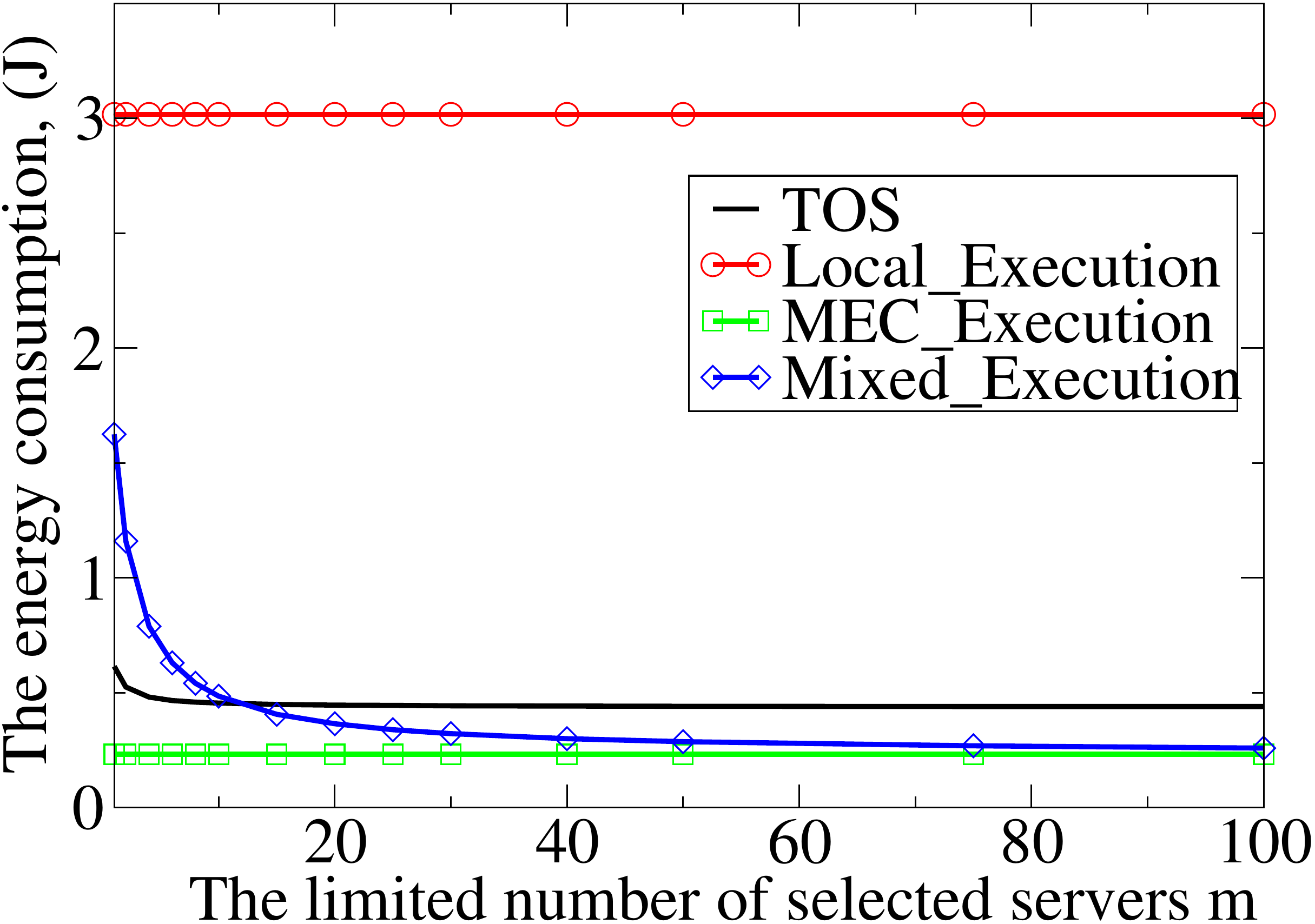}}
    \label{fig: limitenergy}\qquad
    \vspace{-2mm}
  \caption{The impact of $m$ with $\alpha = 20$ on the overall delay and energy consumption.}
  \label{fig: limit}
  \vspace{3mm}
\end{figure*}

\begin{figure*}[t!]
    \centering
  \subfloat[Impact of weight $\alpha$ on the overall delay]{%
        \includegraphics[width=0.4\linewidth]{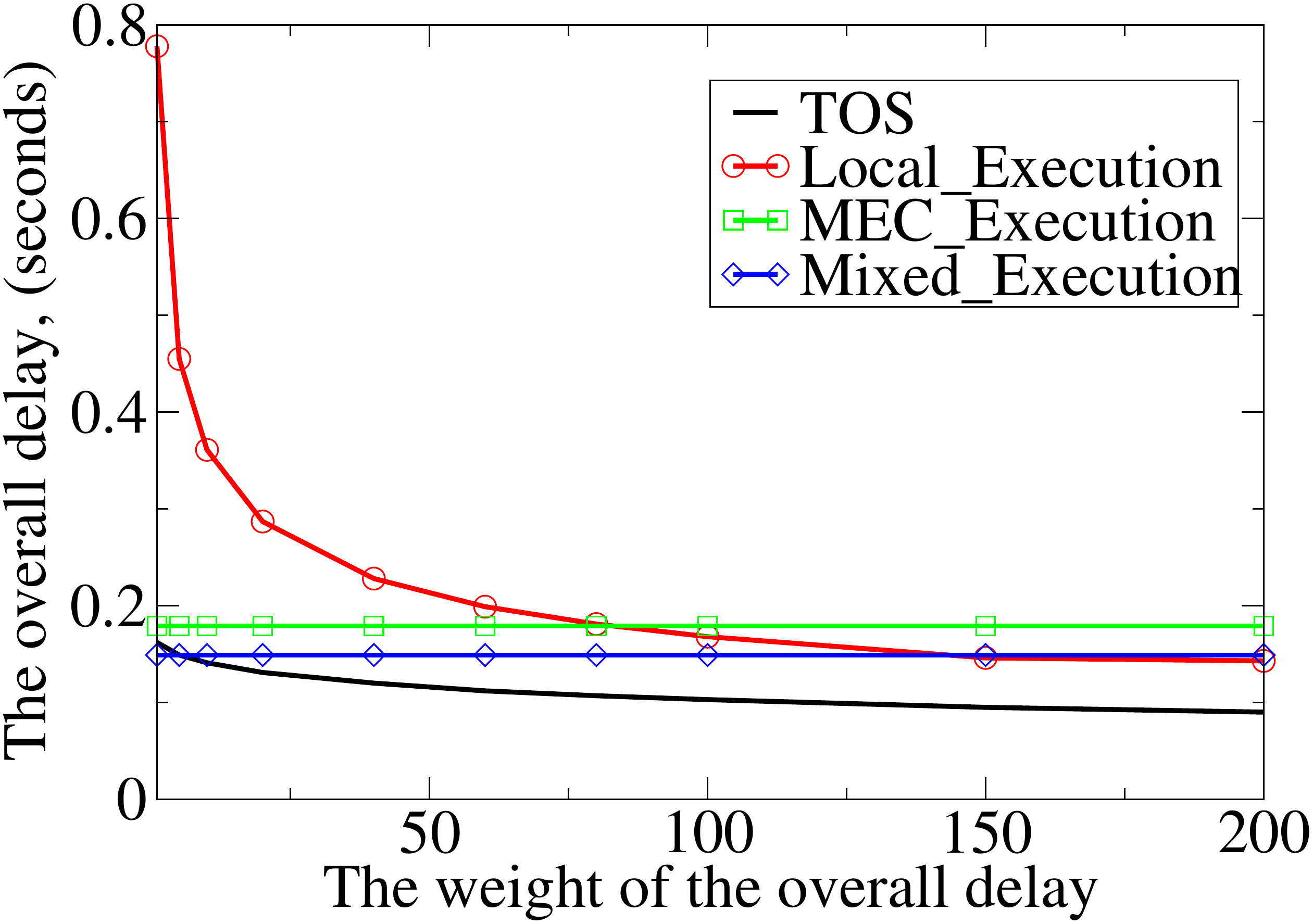}} \label{fig: weightdelay}
  \qquad\quad\quad\subfloat[Impact of weight $\alpha$ on the energy consumption]{%
        \includegraphics[width=0.4\linewidth]{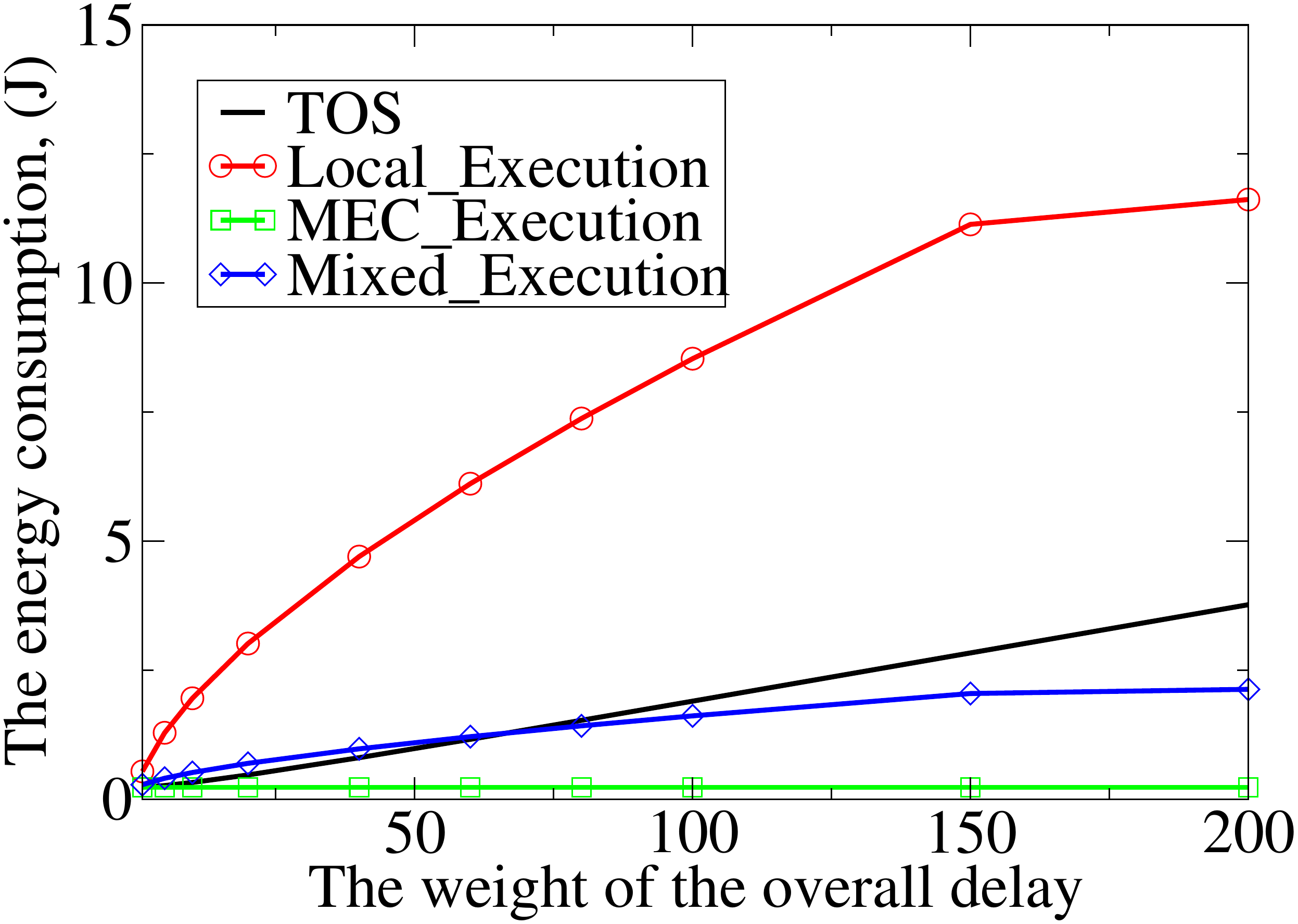}} \label{fig: weightenergy}
    \vspace{-2mm}
  \caption{The impact of weight $\alpha$ with $m =5$ on the overall delay and energy consumption.}
  \label{fig: weight}
  \vspace{4mm}
\end{figure*}

\subsection{Performance evaluation}
We first evaluate the performance of algorithm {\tt TOS} against benchmark algorithms {\tt Local\_Execution}, {\tt MEC\_Execution}, and {\tt Mixed\_Execution}, by varying both numbers of servers $m$ and $\alpha$. The overall costs delivered by different algorithms are normalized based on the overall cost delivered by {\tt TOS}. 

Fig.~\ref{fig:performance} shows the overall costs of task execution delivered by the mentioned algorithms. We can see from Fig.~\ref{fig:performance}~(a) that the overall cost by algorithm {\tt TOS} is around 80\% of those by algorithms {\tt MEC\_Execution} and {\tt Mixed\_Execution}, and 34\% of that by algorithm {\tt Local\_Execution} when $\alpha= 20$. In Fig.~\ref{fig:performance}~(b), $m = 5$.
Similarly, we can see from Fig.~\ref{fig:performance}~(b) that algorithm {\tt TOS} outperforms the other algorithms.

\subsection{Impact of different parameters}
We then study the impacts of important parameters: $m$ and $\alpha$ on the overall delay and energy consumption of the proposed algorithm {\tt TOS}.

We investigate the impact of the number of chosen servers $m$ on the performance of algorithms {\tt TOS}, {\tt Local\_Execution}, {\tt MEC\_Execution}, and {\tt Mixed\_Execution} in terms of the overall delay and the energy consumption on the mobile device. We vary $m$ from 1 to 200 while fixing $\alpha$ at 20. It can be seen from Fig.~\ref{fig: limit}~(a) that algorithm {\tt TOS} has the minimum overall delay because of its optimal task allocation among the mobile device and the selected servers. Particularly, when $m\leq 8$, it greatly impacts the overall delay by the mentioned  algorithms; otherwise, its impact is negligible.

Fig.~\ref{fig: limit}~(b) shows that algorithm {\tt TOS} is not energy-efficient as it requires local processing at the mobile device to minimize the overall delay when $m\leq 8$; otherwise, its energy consumption impact is negligible.

We also evaluate the impact of parameter $\alpha$ on the performance of the mentioned algorithms in terms of the overall delay and energy consumption. It can be seen from Fig.~\ref{fig: weight}~(a) that the increase on the value of $\alpha$ will shorten the overall delay of the task execution when $\alpha \leq 70$; otherwise, its impact on overall delay is negligible. However, as shown in Fig.~\ref{fig: weight}~(b), algorithms {\tt TOS} and {\tt Local\_Execution} will incur more energy consumption at the mobile device. 


\section{Related Work}\label{sec06}
MEC has attracted lots of attention from both industries and academia, see~\cite{MYZH17} for a comprehensive survey.

Many previous studies on MEC focused on designing a system to support task offloading and allocation~\cite{RRPK98,CBCW+10,KAHM+12,LMZL16,YHCK17,ZWGK+13,SSB15,MZL16}. Several offloading frameworks including MAUI~\cite{CBCW+10} and ThinkAir~\cite{KAHM+12}, were proposed to prolong the battery lifetime of mobile devices and reduce processing delays of computation tasks. Particularly, Liu {\it et al.}~\cite{LMZL16} introduced a delay-optimal task scheduling policy with random task arrivals for a single-user MEC system. For multi-user MEC systems, Chen {\it et al.}~\cite{CJLF16} proposed a distributed offloading algorithm which can achieve the Nash equilibrium to reduce wireless interferences. You {\it et al.}~\cite{YHCK17} designed a centralized task offloading system for MEC based on time-division multiple access and orthogonal frequency-division multiple access. All of these mentioned work assumed non-adjustable processing capabilities of CPUs at mobile devices. However, it is energy inefficient as the energy consumption of a CPU increases super-linearly with its CPU-cycle frequency~\cite{BB96}.

Several recent research has applied the dynamic voltage and frequency scaling (DVFS) technique into task offloading in an MEC environment. For example,  Zhang {\it et al.}~\cite{ZWGK+13} proposed an energy-optimal binary offloading policy, i.e., either executing a task on a mobile device or offloading the entire task to MEC, under stochastic wireless channels for a single-user MEC system with the DVFS technique. In addition, Sardellitti {\it et al.}~\cite{SSB15} proposed an algorithm for both communication and computational resource allocations in multi-cell MIMO cloud computing systems. Mao {\it et al.}~\cite{MZL16} studied an MEC system with energy harvesting devices and focused on reducing the energy consumption of mobile devices by scheduling CPU-cycle frequencies.


\section{Conclusion}\label{sec07}
In this paper, we studied task offloading from mobile devices to MEC. We formulated a novel delay-energy joint optimization problem. 
We first formulated the problem as a mixed-integer nonlinear program problem. We then performed a relaxation to relax the original problem to a nonlinear programming problem whose solution can be found in polynomial time, and showed how to make use of the solution to the relaxed problem to find a feasible solution to the problem of concern in this paper. We finally evaluate the performance of the proposed algorithm thorough experimental simulations. Experimental results demonstrated that the proposed algorithm outperforms the three baselines.

\section*{APPENDIX}
\subsection*{A. Proof of Inequalities~(\ref{eqjensen})}
Define $\varphi(x) = \frac{1}{x^2}$ with $x > 0$, which is a convex function. Let $\lambda_w = \frac{1}{B}$, where $1 \le w \le B$. By Jensen's Inequality, we have
\begin{eqnarray} \label{eqjensen2}
    \varphi(\sum_{w=1}^{B}\lambda_w x_w) \le \sum_{w=1}^{B}\lambda_w \varphi(x_w)
\end{eqnarray}
Let $x_w = \frac{1}{f_w}$, where $1 \le w \le B$. Recall $R' = \sum_{w=1}^{B}\frac{1}{f_w}$. The left side of Inequality~(\ref{eqjensen2}) is $\varphi(\sum_{w=1}^{B}\lambda_w x_w) = \varphi(R'/B) = B^2/R'^2$ and the right side is $\sum_{w=1}^{B}\lambda_w \varphi(x_w) = \frac{1}{B}\sum_{w=1}^{B}f_w^2$. Hence, by Inequality~(\ref{eqjensen2}) we have $\sum_{w=1}^{B}f_w^2 \ge \frac{B^3}{R'^2}$. The equality holds when $f_w = B/R'$. 
Recall $R' \le R_{max}$. Inequalities~(\ref{eqjensen}) thus holds.


\end{document}